\pgfplotsset{compat=1.18}
    \newcolumntype{P}[1]{>{\centering\hspace{0pt}\arraybackslash}p{#1}}
    \newcolumntype{M}[1]{>{\centering\hspace{0pt}\arraybackslash}m{#1}}
    \newcolumntype{L}{>{\centering\arraybackslash}m{3cm}}
\def\BibTeX{{\rm B\kern-.05em{\sc i\kern-.025em b}\kern-.08em
    T\kern-.1667em\lower.7ex\hbox{E}\kern-.125emX}}
\setlist{nolistsep,leftmargin=.6cm}
\begin{document}
\newtheorem{proposition}{Proposition}
\newtheorem{definition}{Definition}
\newtheorem{lemma}{Lemma}
\newtheorem*{theorem*}{Theorem}
\newtheorem{theorem}{Theorem}
\newtheorem{corollary}{Corollary}
\newtheorem{assumption}{Assumption}
\newtheorem{claim}{Claim}
%
\title{Towards Efficient Device Identification in Massive Random Access: A Multi-stage Approach}

\author{Jyotish~Robin,~\IEEEmembership{Graduate~Student~Member,~IEEE,}
        Elza~Erkip,~\IEEEmembership{Fellow,~IEEE}
\thanks{Jyotish Robin and Elza Erkip are with the Department of Electrical and
Computer Engineering, NYU Tandon School of Engineering, 6 MetroTech Center, Brooklyn, NY 11201 USA. (e-mail: jyotish.robin@nyu.edu). }
}

\allowdisplaybreaks
\markboth{IEEE Transactions on Communications}%
{Submitted paper}

\maketitle

\vspace{-1cm}
\begin{abstract}
Efficient and low-latency wireless connectivity between the base station (BS) and a sparse set of sporadically  active devices from a massive number of devices is crucial for random access in emerging massive machine-type communications (mMTC). This paper addresses the challenge of identifying active devices while meeting stringent access delay and reliability constraints in mMTC environments. A novel multi-stage active device identification framework is proposed where we aim to  refine a  partial estimate of the active device set  using feedback and hypothesis testing across multiple stages eventually
leading to an exact recovery of active devices after the final stage of processing. In our proposed approach, active devices independently transmit binary preambles during each stage, leveraging feedback signals from the BS, whereas the BS employs a non-coherent binary energy detection. The minimum user identification cost associated with our multi-stage non-coherent active device identification framework with feedback, in terms of the required number of channel-uses, is quantified  using information-theoretic techniques in the asymptotic regime of total number of devices $\ell$ when the number of active devices $k$ scales as $k=\Theta(1)$. Practical implementations of our multi-stage active device identification schemes, leveraging Belief Propagation (BP) techniques, are also presented and evaluated. Simulation results show that our multi-stage BP strategies exhibit superior performance over single-stage strategies, even when considering overhead costs associated with feedback and hypothesis testing.

\end{abstract}

\begin{IEEEkeywords}
Massive machine-type communication (mMTC), massive random access, active device identification, multi-stage, group testing.
\end{IEEEkeywords}

%
\IEEEpeerreviewmaketitle

\section{Introduction}
Next generation of wireless networks strive to provide massive connectivity, facilitating new applications involving massive machine-to-machine  type communications such as in Internet of Things (IoT). These applications entail the deployment of numerous wireless sensors to monitor diverse connections, spanning from wearable biomedical devices to intelligent industrial systems \cite{8454392}. Despite the massive scale of machine-type connections, only a small subset of devices are active at any given time frame as they are triggered by external events \cite{8264818, robin2021sparse}. Thus, efficient and low-latency wireless connectivity between the base station (BS) and a massive number of sporadically active devices is a key service requirement in most of the emerging IoT applications. The conventional approach of allocating dedicated transmission resources to all users overwhelms the available system resources, necessitating prompt detection of active users by the BS to decode their data with minimal latency.

Traditional wireless communication systems typically employ a grant-based strategy for random access, such as ALOHA and its variants, wherein each active device utilizes a preamble sequence to indicate its activity status to the BS \cite{8187056}. However, grant-based strategies are often a suboptimal choice in massive random access scenarios since the high volume of devices results in frequent collisions. Consequently, excessive retransmissions occur, violating the low-latency requirements inherent in IoT systems.
To alleviate the inefficiencies associated with grant-based methods, several recent works have proposed  grant-free (GF) random access schemes eliminating the need for explicit grants from the BS \cite{8454392,9154288, 9537931,8264818,7973146}. The GF schemes offer a more scalable and efficient solution for IoT networks with massive connectivity. The first stage of typical GF transmission process is a user identification phase where the active user transmits a preamble sequence which is utilized by the BS for identifying active users. The sparse nature of the active device set renders the compressed sensing  techniques suitable for active device identification \cite{9060999}. 

Several recent studies \cite{8734871,7952810,8454392,5695122,7282735}  make use of the sparse and sporadic activity pattern of devices to apply compressed sensing techniques like approximate message passing  \cite{8734871,7952810,8454392,5695122} and sparse graph codes \cite{7282735} for identifying the set of active devices. Our prior work in \cite{10198448} considered the active device identification problem with constraints typical to secure, energy-limited mMTC networks. We proposed a single-stage active device identification protocol in which active devices jointly transmit their independent non-orthogonal On-Off Keying (OOK)-based preambles, while the BS employs a non-coherent energy detection strategy to identify these active devices. We noted that the framework outlined above, aimed at recovering the sparse set of active devices using binary measurements, can be regarded as the well-known  group testing (GT) problem \cite{8926588}. We also analyzed various practical schemes  for active device identification in the non-coherent OOK setting using Noisy-Combinatorial Orthogonal Matching Pursuit (N-COMP) and Belief Propagation (BP) techniques \cite{6120391,5169989}. 

One key observation in \cite{10198448} was that our proposed BP based strategy can significantly reduce the number of channel-uses required for user identification when operating under a partial recovery criterion, which allows for a small number of misdetections and false positives. Motivated by this, in this paper,  we explore a multi-stage approach for the exact recovery of active devices. In each stage, partial estimates are computed and successively refined, leading to exact recovery in the final stage. In addition to benefiting from the reduced number of channel-uses required for partial recovery, we also enhance the sparsity of the problem in the further stages. This is expected to improve the overall rate of active device identification since GT decoding is known to have superior performance as the sparsity level of the problem increases \cite{8926588}. As an example, let's consider a two-stage approach with $\ell = 1000$ devices and $k = 20$ active devices as illustrated in Fig. \ref{fig:partailrec2}. In the first stage, our objective is to achieve a partial recovery of, for instance, 90$\%$ of the active devices. Thereafter, in the second stage, we aim for an exact recovery of the remaining 10$\%$ of active devices as well as a validation of the activity status of the $90\%$ that have already been identified. Therefore, as we argue in Section III, our two-stage strategy decomposes the sparse active set identification problem with $\ell = 1000, k = 20$ into a denser problem with $\ell = 20, k = 18$ and a sparser problem with $\ell = 980, k = 2$. The denser problem can be efficiently handled by individual hypothesis testing whereas a GT  based decoder is suitable for the sparser problem. 

In order to facilitate this multi-stage active device identification, the BS needs to provide feedback to the users after each stage to indicate if they are classified as active or inactive. The devices rely on this classification status to determine if they belong to the denser or sparser problem in the next stage. Feedback channels exist in typical modern wireless communication systems such as LTE and 5G, facilitating efficient communication between the transmitter and receiver by providing necessary information for channel estimation, scheduling, and adaptive modulation and coding \cite{fettweis20145g}. In these systems, feedback is typically transmitted over dedicated channels, with the number of channel uses varying based on the specific communication protocol and system configuration. For instance, in 5G systems, the number of feedback channel-uses can range from tens to potentially hundreds, especially in scenarios involving advanced MIMO and beamforming techniques \cite{9031236}. In our proposed active device identification scheme, we show in Section IV-C that the feedback required is minimal, allowing for seamless integration into existing feedback channels without significant overhead. In Section V, we demonstrate that, in practice, our multi-stage active device identification framework can lead to a lower overall number of channel-uses, or equivalently, reduced resource utilization.

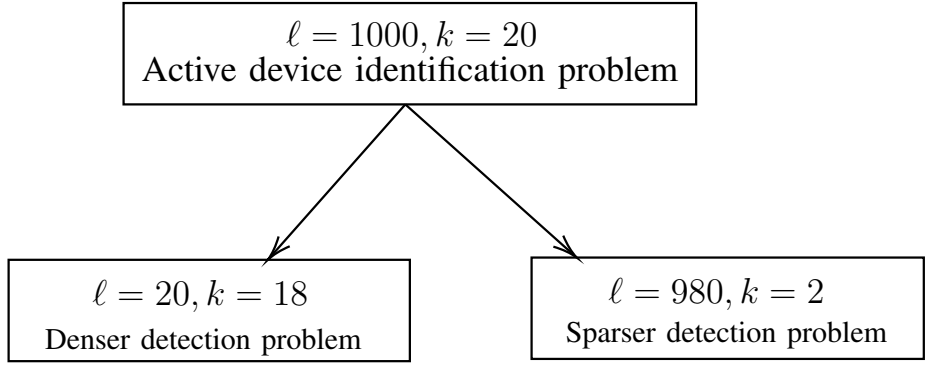
\begin{figure}   
	\centering
	\resizebox{5.0in}{!}{

\tikzset{every picture/.style={line width=0.75pt}} 

\begin{tikzpicture}[x=0.75pt,y=0.75pt,yscale=-1,xscale=1]

\draw   (171.33,36) -- (425.33,36) -- (425.33,81) -- (171.33,81) -- cycle ;
\draw    (296.33,81) -- (236.66,148.5) ;
\draw [shift={(235.33,150)}, rotate = 311.48] [color={rgb, 255:red, 0; green, 0; blue, 0 }  ][line width=0.75]    (10.93,-3.29) .. controls (6.95,-1.4) and (3.31,-0.3) .. (0,0) .. controls (3.31,0.3) and (6.95,1.4) .. (10.93,3.29)   ;
\draw   (120.33,150) -- (298.33,150) -- (298.33,195) -- (120.33,195) -- cycle ;
\draw    (296.33,81) -- (370.84,147.67) ;
\draw [shift={(372.33,149)}, rotate = 221.82] [color={rgb, 255:red, 0; green, 0; blue, 0 }  ][line width=0.75]    (10.93,-3.29) .. controls (6.95,-1.4) and (3.31,-0.3) .. (0,0) .. controls (3.31,0.3) and (6.95,1.4) .. (10.93,3.29)   ;
\draw   (352.33,149) -- (526.33,149) -- (526.33,194) -- (352.33,194) -- cycle ;

\draw (242,42.4) node [anchor=north west][inner sep=0.75pt]    {$\ell =1000,k=20$};
\draw (178,58.4) node [anchor=north west][inner sep=0.75pt]    {Active device identification problem};
\draw (156,157.4) node [anchor=north west][inner sep=0.75pt]    {$\ell =20,k=18$};
\draw (385,156.4) node [anchor=north west][inner sep=0.75pt]    {$\ell =980,k=2$};

\draw (366,176) node [anchor=north west][inner sep=0.75pt]   [align=left] {{\small Sparser detection problem}};
\draw (135,179) node [anchor=north west][inner sep=0.75pt]   [align=left] {{\small Denser detection problem}};

\end{tikzpicture}
}
	\setlength{\belowcaptionskip}{-12pt}
 \setlength{\belowcaptionskip}{-20pt} 
	\caption{Sample illustration of multi-stage active device identification for $\ell =1000$ devices and $k =20 $ active devices. }
	\label{fig:partailrec2}
	\end{figure}

The main contributions of this paper can be summarized as follows:
\begin{enumerate}[label=(\roman*)]
\item 
We propose a novel multi-stage non-coherent activity detection strategy for  mMTC networks in which active devices jointly transmit their OOK-modulated preambles in each stage based on the feedback signal received from BS and the BS uses a non-coherent energy detection strategy to identify the active devices.

 \item We utilize information-theoretic techniques to quantify the minimum user identification cost for our multi-stage  non-coherent strategy with feedback, which represents the minimum number of channel-uses required to identify active users \cite{9517965,7852531}. We show that, in the limit of a large number of users, the fundamental bound on user identification cost remains invariant irrespective of the number of stages.

\item We present an efficient strategy for constructing the feedback signal utilized by the BS to inform each device in the partially estimated set that they are classified as active. Additionally, we analyze the number of channel-uses required to provide this feedback.

\item \sloppy We present and study several practical schemes for multi-stage active device identification in the non-coherent OOK setting based on BP techniques \cite{6120391,5169989}. We also show that the performance gap observed between these practical strategies and our theoretical characterization of the minimum user identification cost is small. Even though (ii) suggests that there is no gain from multi-stage strategies in the asymptotic setting, when the number of users is finite, we show that multi-stage BP strategies can outperform single-stage BP strategies even when the feedback cost is incorporated.

\end{enumerate}

The remainder of this paper is structured as follows. In Section II, we define the system model for the problem of  active device identification in an $m$-stage non-coherent $(\ell,k)$-MnAC system with $\ell$ total devices and $k$ active devices as well as establish its equivalence to a GT problem. In addition, we formally define the key metric of interest, viz, the minimum user identification cost. In Section III, we derive the minimum user identification cost for the $m$-stage non-coherent $(\ell,k)-$MnAC by viewing active device identification as a decoding problem in an equivalent point-to-point communication channel with feedback. In Section IV, we present a  multi-stage active device identification protocol in which the BS refines its estimate of active device set through multiple stages of signal processing and feedback. In Section V, we numerically evaluate several practical strategies for multi-stage active device identification and compare them against the single-stage strategies as well as theoretical bounds. In Section VI, we conclude the paper.

\section{System model}~\label{sec:sysmodl} 
\vspace{-0.9cm}
\subsection{Network model}

Consider an mMTC network comprising of $\ell$ users denoted by the set $\mathcal{D}=\{1,2,\ldots,\ell\}$. Among these $\ell$ users, only a subset of size $k$ is active and wishes to access the channel. It is assumed that the value of $k$ is periodically estimated at the base station (BS) \textit{a-priori} \cite{4085381}. In typical massive random access applications, the value of $\ell$ ranges from approximately $10^4$ to $10^6$, while $k$ typically falls in the range of 10 to 100 \cite{8264818,9564037}. Taking into account sparse user activity, we pay particular attention to the asymptotic regime where $\ell$ can be unbounded while the number of active users scales as $k=\Theta(\ell^{\alpha});$  $0 \leq \alpha <1$. This is consistent with the combinatorial version of the original Many-access channel (MnAC) model \cite{7852531} introduced in our prior work in \cite{10198448}.
The active user set denoted as $\mathcal{A}=\{a_1,a_2,\ldots, a_k\}$, is assumed to be uniform random among all possible ${\ell \choose k} $ subsets of size $k$ from the set $\mathcal{D}$, and is unknown at the BS. The vector $\boldsymbol{\beta}=(\beta_{1},\ldots,\beta_{\ell})$ represents the activity status vector such  that $\beta_{i}=1$, if $i^{th}$ user is active; 0 otherwise. The objective of the BS is to estimate the active user set $\mathcal{A}$ or equivalently $\boldsymbol{\beta}$,  in a timely manner without overwhelming the available network resources for channel access.

 We consider a multi-stage adaptive framework for active device identification in which the  BS can provide feedback to the devices after each stage through a feedback channel and the devices use this feedback to decide on their transmission during the subsequent stage. Specifically, we consider an $m$-stage approach as follows. 
 
 For every stage $j \in \{1,\ldots,m\},$  each user $i \in \mathcal{D}$ employs an On-Off Keying (OOK)-modulated binary preamble  $\textbf{X}^{i,j} \triangleq (X_1^{i,j}, \ldots, X_{n_j}^{i,j}) \in \{0, \sqrt{P}\}^{n_j}$ designed using a Bernoulli random variable $Bern(q_{t}^{i,j}), \forall t \in \{1,\ldots, n_j\}. $  Here $q_{t}^{i,j}$ denotes the probability of `On' symbols for user $i$ during channel-use $t$ of stage $j$. We use $\boldsymbol{q}^j:=\{q_{t}^{i,j}: i \in \mathcal{D}, t \in \{1,\ldots, n_j\}\}$ to denote the set of sampling probabilities for stage $j$. Here, ${P}$ indicates the power level associated with the `On' symbols. 
The Bernoulli RVs are independent across channel-uses $t =\{1,\ldots,n_j\}$, stages $j =\{1,\ldots,m\}$  and users $i =\{1,\ldots,\ell\}$. Note that inactive users $ i \in \mathcal{D}\setminus \mathcal{A}$ use $q_{t}^{i,j}=0$ in order to remain silent by employing a sequence of `Off' symbols of duration $n_j$. We use $\boldsymbol{\vec{0}}_{n_j}$ to denote this all-zero preamble. Once generated, the preambles  $\textbf{X}^{i,j}$'s are fixed and are known to the BS. 

During stage $j$,  each device $i \in \mathcal{D}$ synchronously transmits their preamble $\textbf{X}^{i,j}, \forall j \in \{1,\ldots, m\}$.  Let $\overline{\boldsymbol{X}}{(j)}$ denote the $\ell \times n_j$ transmitted preamble matrix where the $i^{th}$ row is $\textbf{X}^{i,j}$ .  We use ${\boldsymbol{X}_t}{(j)}$ to denote the $t^{th}$ column of $\overline{\boldsymbol{X}}{(j)}$ representing the $t^{th}$ symbols in the preambles of the $\ell$ users during stage $j$. i.e., ${\boldsymbol{X}_t}{(j)} =(X^{1,j}_{t}, \ldots X^{\ell,j}_{t})^T, \forall t \in \{1,\ldots,n_j\}$. In addition, we use ${\tilde{\boldsymbol{X}}_t}{(j)}$ to denote the subset of ${\boldsymbol{X}_t}{(j)} $ corresponding to the $k$ active users. i.e., ${\tilde{\boldsymbol{X}}_t}{(j)} =(X^{a_1,j}_{t}, \ldots X^{a_k,j}_{t})^T, \forall t \in \{1,\ldots,n_j\}$. The symbol received during the $t^{th}$ channel-use  of stage $j,\forall j\in \{1,\ldots,m\}$ is given by
\begin{equation}
    S_t^j=(\boldsymbol{\beta} \circ \mathbf{h}^j_t) \cdot {\boldsymbol{X}_t}{(j)}+W^j_t, \forall t \in\{1,\ldots,n_j\}.
    \label{qqq}
\end{equation} Here, $\textbf{h}^j_t=(h^{1,j}_t,\ldots,h^{\ell,j}_{t})$ is the vector of channel coefficients of the $\ell$ users during the $t^{th}$ channel-use of stage $j$  where $h_t^{i,j}$'s are   i.i.d  $ \mathcal{C} \mathcal{N}\left(0,\sigma^{2}\right), \forall i \in \mathcal{D}, \forall t \in \{1,2,\ldots n_j\}, \forall j \in \{1,\ldots, m\} $. The AWGN noise vector $\textbf{W}^j=(W^j_1,\ldots, W^j_{n_j})$ is composed of i.i.d entries,  $W^j_t  \sim \mathcal{C} \mathcal{N}\left(0, \sigma_w^{2}\right)$ and is independent of $\textbf{h}^j_t$ $\forall t\in \{1,\ldots,n_j\}, j \in \{1,\ldots, m\}$. We use $\boldsymbol{\beta} \circ \mathbf{h}^j_t$ to denote the element-wise product between the vectors $\boldsymbol{\beta}$ and $\mathbf{h}^j_t$. Note that (\ref{qqq}) explicitly shows that only active users contribute to $S_t^j$. 

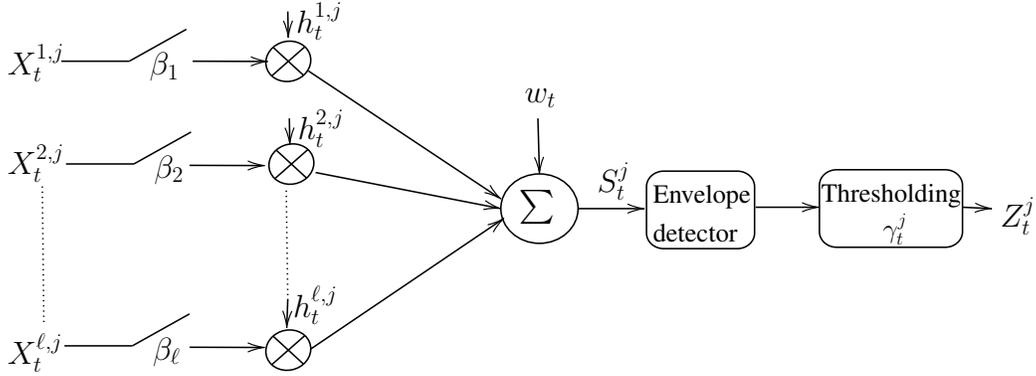
\begin{figure}   
	\centering
	\resizebox{5.5in}{!}{
\tikzset{every picture/.style={line width=1pt}} 

\begin{tikzpicture}[x=1pt,y=0.9pt,yscale=-1,xscale=1]

\draw  [dash pattern={on 0.84pt off 2.51pt}]  (55.67,152.17) -- (56.08,193.14) -- (56.67,251.17) ;
\draw    (153,62) -- (199.67,62.16) ;
\draw [shift={(201.67,62.17)}, rotate = 180.2] [color={rgb, 255:red, 0; green, 0; blue, 0 }  ][line width=0.75]    (10.93,-3.29) .. controls (6.95,-1.4) and (3.31,-0.3) .. (0,0) .. controls (3.31,0.3) and (6.95,1.4) .. (10.93,3.29)   ;
\draw    (151,137) -- (197.67,137.16) ;
\draw [shift={(199.67,137.17)}, rotate = 180.2] [color={rgb, 255:red, 0; green, 0; blue, 0 }  ][line width=0.75]    (10.93,-3.29) .. controls (6.95,-1.4) and (3.31,-0.3) .. (0,0) .. controls (3.31,0.3) and (6.95,1.4) .. (10.93,3.29)   ;
\draw    (151,268) -- (197.67,268.16) ;
\draw [shift={(199.67,268.17)}, rotate = 180.2] [color={rgb, 255:red, 0; green, 0; blue, 0 }  ][line width=0.75]    (10.93,-3.29) .. controls (6.95,-1.4) and (3.31,-0.3) .. (0,0) .. controls (3.31,0.3) and (6.95,1.4) .. (10.93,3.29)   ;
\draw   (225.72,51.59) .. controls (231.16,57.15) and (230.77,66.35) .. (224.85,72.15) .. controls (218.93,77.94) and (209.72,78.14) .. (204.28,72.58) .. controls (198.84,67.02) and (199.23,57.82) .. (205.15,52.02) .. controls (211.07,46.22) and (220.28,46.03) .. (225.72,51.59) -- cycle ; \draw   (225.72,51.59) -- (204.28,72.58) ; \draw   (224.85,72.15) -- (205.15,52.02) ;
\draw   (227.72,125.59) .. controls (233.16,131.15) and (232.77,140.35) .. (226.85,146.15) .. controls (220.93,151.94) and (211.72,152.14) .. (206.28,146.58) .. controls (200.84,141.02) and (201.23,131.82) .. (207.15,126.02) .. controls (213.07,120.22) and (222.28,120.03) .. (227.72,125.59) -- cycle ; \draw   (227.72,125.59) -- (206.28,146.58) ; \draw   (226.85,146.15) -- (207.15,126.02) ;
\draw   (225.72,259.59) .. controls (231.16,265.15) and (230.77,274.35) .. (224.85,280.15) .. controls (218.93,285.94) and (209.72,286.14) .. (204.28,280.58) .. controls (198.84,275.02) and (199.23,265.82) .. (205.15,260.02) .. controls (211.07,254.22) and (220.28,254.03) .. (225.72,259.59) -- cycle ; \draw   (225.72,259.59) -- (204.28,280.58) ; \draw   (224.85,280.15) -- (205.15,260.02) ;
\draw    (214.67,233.17) -- (214.06,252) ;
\draw [shift={(214,254)}, rotate = 271.83] [color={rgb, 255:red, 0; green, 0; blue, 0 }  ][line width=0.75]    (10.93,-3.29) .. controls (6.95,-1.4) and (3.31,-0.3) .. (0,0) .. controls (3.31,0.3) and (6.95,1.4) .. (10.93,3.29)   ;
\draw    (215.67,103.17) -- (215.96,120) ;
\draw [shift={(216,122)}, rotate = 268.99] [color={rgb, 255:red, 0; green, 0; blue, 0 }  ][line width=0.75]    (10.93,-3.29) .. controls (6.95,-1.4) and (3.31,-0.3) .. (0,0) .. controls (3.31,0.3) and (6.95,1.4) .. (10.93,3.29)   ;
\draw    (214.67,26.17) -- (214.96,43) ;
\draw [shift={(215,45)}, rotate = 268.99] [color={rgb, 255:red, 0; green, 0; blue, 0 }  ][line width=0.75]    (10.93,-3.29) .. controls (6.95,-1.4) and (3.31,-0.3) .. (0,0) .. controls (3.31,0.3) and (6.95,1.4) .. (10.93,3.29)   ;
\draw    (228,69) -- (351.06,159.98) ;
\draw [shift={(352.67,161.17)}, rotate = 216.48] [color={rgb, 255:red, 0; green, 0; blue, 0 }  ][line width=0.75]    (10.93,-3.29) .. controls (6.95,-1.4) and (3.31,-0.3) .. (0,0) .. controls (3.31,0.3) and (6.95,1.4) .. (10.93,3.29)   ;
\draw    (232.67,142.17) -- (350.71,167.74) ;
\draw [shift={(352.67,168.17)}, rotate = 192.23] [color={rgb, 255:red, 0; green, 0; blue, 0 }  ][line width=0.75]    (10.93,-3.29) .. controls (6.95,-1.4) and (3.31,-0.3) .. (0,0) .. controls (3.31,0.3) and (6.95,1.4) .. (10.93,3.29)   ;
\draw    (229.67,269.17) -- (353.07,176.37) ;
\draw [shift={(354.67,175.17)}, rotate = 143.06] [color={rgb, 255:red, 0; green, 0; blue, 0 }  ][line width=0.75]    (10.93,-3.29) .. controls (6.95,-1.4) and (3.31,-0.3) .. (0,0) .. controls (3.31,0.3) and (6.95,1.4) .. (10.93,3.29)   ;
\draw  [dash pattern={on 0.84pt off 2.51pt}]  (213.67,155.17) -- (214.67,233.17) ;
\draw   (352.67,168.17) .. controls (352.67,154.36) and (363.86,143.17) .. (377.67,143.17) .. controls (391.47,143.17) and (402.67,154.36) .. (402.67,168.17) .. controls (402.67,181.97) and (391.47,193.17) .. (377.67,193.17) .. controls (363.86,193.17) and (352.67,181.97) .. (352.67,168.17) -- cycle ;
\draw    (376.67,103.17) -- (377.62,141.17) ;
\draw [shift={(377.67,143.17)}, rotate = 268.57] [color={rgb, 255:red, 0; green, 0; blue, 0 }  ][line width=0.75]    (10.93,-3.29) .. controls (6.95,-1.4) and (3.31,-0.3) .. (0,0) .. controls (3.31,0.3) and (6.95,1.4) .. (10.93,3.29)   ;
\draw    (402.67,168.17) -- (443.67,168.17) ;
\draw [shift={(445.67,168.17)}, rotate = 180] [color={rgb, 255:red, 0; green, 0; blue, 0 }  ][line width=0.75]    (10.93,-3.29) .. controls (6.95,-1.4) and (3.31,-0.3) .. (0,0) .. controls (3.31,0.3) and (6.95,1.4) .. (10.93,3.29)   ;
\draw   (447,152.33) .. controls (447,146.17) and (452,141.17) .. (458.17,141.17) -- (505.83,141.17) .. controls (512,141.17) and (517,146.17) .. (517,152.33) -- (517,185.83) .. controls (517,192) and (512,197) .. (505.83,197) -- (458.17,197) .. controls (452,197) and (447,192) .. (447,185.83) -- cycle ;
\draw    (517.67,167.17) -- (556.67,167.17) ;
\draw [shift={(558.67,167.17)}, rotate = 180] [color={rgb, 255:red, 0; green, 0; blue, 0 }  ][line width=0.75]    (10.93,-3.29) .. controls (6.95,-1.4) and (3.31,-0.3) .. (0,0) .. controls (3.31,0.3) and (6.95,1.4) .. (10.93,3.29)   ;
\draw   (559,151.33) .. controls (559,145.17) and (564,140.17) .. (570.17,140.17) -- (640.5,140.17) .. controls (646.67,140.17) and (651.67,145.17) .. (651.67,151.33) -- (651.67,184.83) .. controls (651.67,191) and (646.67,196) .. (640.5,196) -- (570.17,196) .. controls (564,196) and (559,191) .. (559,184.83) -- cycle ;
\draw    (651.67,167.17) -- (668.67,168.06) ;
\draw [shift={(670.67,168.17)}, rotate = 183.01] [color={rgb, 255:red, 0; green, 0; blue, 0 }  ][line width=0.75]    (10.93,-3.29) .. controls (6.95,-1.4) and (3.31,-0.3) .. (0,0) .. controls (3.31,0.3) and (6.95,1.4) .. (10.93,3.29)   ;
\draw    (68,62) -- (112,62) ;
\draw    (112,62) -- (149,39) ;

\draw    (71,136) -- (115,136) ;
\draw    (115,136) -- (152,113) ;

\draw    (70,267) -- (114,267) ;
\draw    (114,267) -- (151,244) ;

\draw (32,48.4) node [anchor=north west][inner sep=0.75pt]   [font=\LARGE]{$X_{t}^{1,j}$};
\draw (32,120.4) node [anchor=north west][inner sep=0.75pt]   [font=\LARGE]{$X_{t}^{2,j}$};
\draw (32,256.4) node [anchor=north west][inner sep=0.75pt]   [font=\LARGE] {$X_{t}^{\ell,j}$};
\draw (219,18.4) node [anchor=north west][inner sep=0.75pt]  [font=\LARGE] {$h_t^{1,j}$};
\draw (220,95.4) node [anchor=north west][inner sep=0.75pt]  [font=\LARGE]  {$h_t^{2,j}$};
\draw (217,222.4) node [anchor=north west][inner sep=0.75pt] [font=\LARGE]  {$h_t^{\ell,j}$};
\draw (363,153.4) node [anchor=north west][inner sep=0.75pt]  [font=\huge]  {$\sum $};
\draw (367,80.4) node [anchor=north west][inner sep=0.75pt]  [font=\LARGE] {$w_{t}$};
\draw (414,133.4) node [anchor=north west][inner sep=0.75pt]  [font=\LARGE]  {$S_{t}^j$};
\draw (559,149) node [anchor=north west][inner sep=0.75pt]  [font=\Large][align=left] {Thresholding};
\draw (599,170) node [anchor=north west][inner sep=0.75pt]  [font=\Large][align=left] {$\gamma_t^j$};
\draw (450,149.33) node [anchor=north west][inner sep=0.75pt]  [font=\Large] [align=left] {Envelope \\[-10pt] detector};
\draw (675,158.4) node [anchor=north west][inner sep=0.75pt][font=\LARGE]  {$Z_{t}^j$};
\draw (124,54.4) node [anchor=north west][inner sep=0.75pt]  [font=\LARGE]{$\beta _{1}$};
\draw (127,128.4) node [anchor=north west][inner sep=0.75pt]   [font=\LARGE]{$\beta _{2}$};
\draw (126,259.4) node [anchor=north west][inner sep=0.75pt] [font=\LARGE]  {$\beta _{\ell}$};

\end{tikzpicture}

}
	\setlength{\belowcaptionskip}{-22pt}
	\caption{Non-coherent $(\ell,k)$-Many Access Channel at channel-use $t$ of stage $j$; $j \in \{1,\ldots,m\};t \in \{1,\ldots,n_j(\ell)\}$. }
	\label{fig:qtzn}
	\end{figure}

We assume that due to the fast fading nature of the channel which could arise, for example, in secure mMTC networks \cite{9360811}, there is no channel state information (CSI) available at the BS and hence the BS employs non-coherent detection \cite{6120373} as shown in Fig. \ref{fig:qtzn}. During the $t^{th}$ channel use, the envelope detector processes  $S^j_t$ to obtain $|S^j_t|$, the envelope of $S^j_t$.  Thresholding  after envelope detection produces a binary output $Z^j_{t}$ such that
\begin{equation}
  Z_{t}^j=1 \text { if } |S^j_t|^2>\gamma_t^j ; \text { else } Z_{t}^j=0.  
  \label{threshold}
\end{equation} Here, $\gamma_t^j $ represents a pre-determined threshold parameter that may depend on the stage and channel-use indices. We use $\boldsymbol{Z}^{j}\triangleq (Z^j_1,\ldots Z^j_{n_j})$ to denote the output vector at stage $j$. We use the terminology \textit{non-coherent $(\ell, k)-$MnAC} to refer to the channel model described above where devices are limited by OOK based transmissions and the BS is limited by non-coherent energy detection.

In contrast to our prior work in \cite{10198448} where we had a single stage ($m=1$), here we consider a model in which the BS broadcasts a feedback signal, $\psi_j(\boldsymbol{Z^1},\ldots,\boldsymbol{Z^{j-1}})$, to all devices over a  feedback channel after each stage $j$. For brevity, we will refer to this feedback signal as $\psi_j$, which is a function of all previous channel outputs $\{\boldsymbol{Z^1},\ldots,\boldsymbol{Z^{j-1}}\}.$ This feedback enables each device $i$, instead of solely using its activity status $\beta_i$, to use the feedback from previous stages $\{\psi_1,\ldots,\psi_{j-1}\}$ for  preamble generation at stage $j$. Hence, $\boldsymbol{X}^{i,j}$ is a function of $\{\psi_1,\ldots,\psi_{j-1}\}$  and  $\beta_i$ as illustrated in Fig. \ref{fig:fbch}. Note that there is no co-operation among devices during preamble generation. i.e., preambles are generated independently. The objective of the BS is to  determine the active user set $\mathcal{A}$ or, equivalently, infer the activity status vector $\boldsymbol{\beta}$ after $m$ stages using the output vectors  $\{\textbf{Z}^1,\ldots,\textbf{Z}^m\}$.

\begin{figure}   
	\centering
	\resizebox{6.2in}{!}{%
\tikzset{every picture/.style={line width=0.85pt}} 

\begin{tikzpicture}[x=0.75pt,y=0.75pt,yscale=-1,xscale=1]

\draw   (38.94,152.64) -- (179.89,152.64) -- (179.89,214.33) -- (38.94,214.33) -- cycle ;
\draw    (84,84.27) -- (471.48,85.34) ;
\draw    (454.34,173.96) -- (498.69,173.96) ;
\draw    (471.2,173.27) -- (471.47,87.34) ;
\draw [shift={(471.48,84.34)}, rotate = 90.23] [color={rgb, 255:red, 0; green, 0; blue, 0 }  ][line width=0.75]    (10.93,-3.29) .. controls (6.95,-1.4) and (3.31,-0.3) .. (0,0) .. controls (3.31,0.3) and (6.95,1.4) .. (10.93,3.29)   ;
\draw    (84,84.27) -- (84.38,150.67) ;
\draw [shift={(84.4,152.67)}, rotate = 269.53] [color={rgb, 255:red, 0; green, 0; blue, 0 }  ][line width=0.75]    (10.93,-3.29) .. controls (6.95,-1.4) and (3.31,-0.3) .. (0,0) .. controls (3.31,0.3) and (6.95,1.4) .. (10.93,3.29)   ;
\draw    (179.9,172.67) -- (356,172.17) ;
\draw   (356.62,147.49) -- (456.21,147.49) -- (456.21,204.09) -- (356.62,204.09) -- cycle ;
\draw   (498.56,146.57) -- (585.21,146.57) -- (585.21,202.26) -- (498.56,202.26) -- cycle ;
\draw    (13.3,163.53) -- (36.31,163.53) ;
\draw [shift={(38.31,163.53)}, rotate = 180] [color={rgb, 255:red, 0; green, 0; blue, 0 }  ][line width=0.75]    (10.93,-3.29) .. controls (6.95,-1.4) and (3.31,-0.3) .. (0,0) .. controls (3.31,0.3) and (6.95,1.4) .. (10.93,3.29)   ;
\draw    (13.3,175.4) -- (36.31,175.4) ;
\draw [shift={(38.31,175.4)}, rotate = 180] [color={rgb, 255:red, 0; green, 0; blue, 0 }  ][line width=0.75]    (10.93,-3.29) .. controls (6.95,-1.4) and (3.31,-0.3) .. (0,0) .. controls (3.31,0.3) and (6.95,1.4) .. (10.93,3.29)   ;
\draw    (15.04,203.7) -- (38.05,203.7) ;
\draw [shift={(40.05,203.7)}, rotate = 180] [color={rgb, 255:red, 0; green, 0; blue, 0 }  ][line width=0.75]    (10.93,-3.29) .. controls (6.95,-1.4) and (3.31,-0.3) .. (0,0) .. controls (3.31,0.3) and (6.95,1.4) .. (10.93,3.29)   ;
\draw [line width=1.5]  [dash pattern={on 1.69pt off 2.76pt}]  (22.02,181.79) -- (21.73,200.04) ;
\draw    (585.95,175.79) -- (690.33,176.82) ;
\draw [shift={(690.33,176.83)}, rotate = 180.43] [color={rgb, 255:red, 0; green, 0; blue, 0 }  ][line width=0.75]    (10.93,-3.29) .. controls (6.95,-1.4) and (3.31,-0.3) .. (0,0) .. controls (3.31,0.3) and (6.95,1.4) .. (10.93,3.29)   ;

\draw (513.66,156.87) node [anchor=north west][inner sep=0.75pt]   [align=left] {{Decode}};
\draw (499.66,176.87) node [anchor=north west][inner sep=0.75pt][font=\small]   [align=left] {\text{after $m$ stages}};
\draw (295.66,209.87) node [anchor=north west][inner sep=0.75pt]   [align=left][font=\small] {$m$-stages: $j=\{1,\ldots m\}$ (See Fig.1 for each stage)};
\draw (38.79,162.45) node [anchor=north west][inner sep=0.75pt]  [font=\normalsize] [align=left] {\text{Preamble }};
\draw (106.79,162.45) node [anchor=north west][inner sep=0.75pt]  [font=\normalsize] [align=left] {\text{generation}};
\draw (48.19,187.45) node [anchor=north west][inner sep=0.75pt]  [font=\normalsize] [align=left] {\text{ for each user $i$}};
\draw (176.08,88.8) node [anchor=north west][inner sep=0.75pt]   [align=left] {feedback link after stage $j$};
\draw (89.08,97.8) node [anchor=north west][inner sep=0.75pt]   [align=left] {$\psi_j$};
\draw (356.87,155.86) node [anchor=north west][inner sep=0.75pt]  [font=\normalsize] [align=left] {\begin{minipage}[lt]{67.73pt}\setlength\topsep{0pt}
\begin{center}
Non-coherent \\[-5pt]$\displaystyle ( \ell ,k) -$MnAC 
\end{center}

\end{minipage}};
\draw (120.02,143.64) node [anchor=north west][inner sep=0.75pt]   [align=left] {$ $};
\draw (-3.59,151.49) node [anchor=north west][inner sep=0.75pt]  [font=\small]  {$\beta _{1}$};
\draw (-4.46,168.84) node [anchor=north west][inner sep=0.75pt]  [font=\small]  {$\beta _{2}$};
\draw (-4.46,197.14) node [anchor=north west][inner sep=0.75pt]  [font=\small]  {$\beta _{\ell }$};
\draw (183.06,178.49) node [anchor=north west][inner sep=0.75pt]  [font=\small]  {${\boldsymbol{X}^{i,j}}: f(\beta _{i} ,\psi_{1} ,\dotsc ,\psi_{j-1} )$};
\draw (61.06,218.49) node [anchor=north west][inner sep=0.75pt]  [font=\small]  {$i \in \{1,\ldots \ell\}$};
\draw (474.42,149.64) node [anchor=north west][inner sep=0.75pt]  [font=\normalsize]  {$\mathbf{Z}^{j}$};
\draw (585.93,149.12) node [anchor=north west][inner sep=0.75pt]  [font=\small]  {$\hat{\boldsymbol{\beta }} =\{\hat{\beta }_{1},\dotsc ,\hat{\beta }_{\ell } \}$};
\end{tikzpicture}
}
\caption{Non-coherent $(\ell,k)-$MnAC with stagewise feedback. Index $i =\{1,\ldots \ell\}$ denotes users; $j =\{1,\ldots m\}$ denotes stages where each stage $j$ lasts for $n_j$ channel-uses. Each device preamble is generated independently.}
	\label{fig:fbch}
	\end{figure}
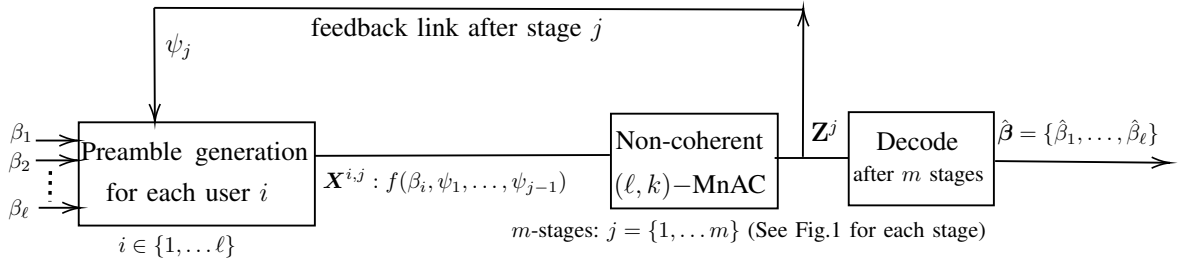

 Consistent with our prior work  \cite{10198448},  we use the following definitions modified to take into account the multi-stage activity detection:

\begin{definition}
\textbf{ (Multi-stage Activity Detection)}: Given $m$ stages and the corresponding set of preambles $\{\mathbf{X}^{i,j}: i \in \mathcal{D}, j \in \{1,\ldots,m\}\}$, an activity detection function   denoted by $\hat{\boldsymbol{\beta}}: \underbrace{\{0,1\}^{n_1} \times \{0,1\}^{n_2} \times \ldots \times \{0,1\}^{n_m}}_{m \text{ terms}} \to \{0,1\}
^{\ell}$  is a deterministic rule that maps the set of binary  channel output vectors $\{\textbf{Z}^1, \textbf{Z}^2, \ldots \textbf{Z}^m\}$ to an estimate $\hat{\boldsymbol{\beta}}$ of the  activity status vector such that the Hamming weight of $\hat{\boldsymbol{\beta}}$ is $k$.
\end{definition}

\begin{definition}
\textbf{(Probability of Erroneous Identification)}: For an activity detection function  $\hat{\boldsymbol{\beta}}$, probability of erroneous identification  $\mathbb{P}_e^{(\ell)}$ is defined as \begin{equation}
     \mathbb{P}_e^{(\ell)}:=\frac{1}{{\ell \choose k}} \sum_{\boldsymbol{\beta}:\sum_{i=1}^{\ell}{\beta_i}=k} \mathbb{P}(\hat{\boldsymbol{\beta}} \neq \boldsymbol{\beta}). \label{errid}
 \end{equation} 
\end{definition}

\begin{definition}
\textbf{ ($\eta \% -$ Partial Recovery)}: For a true active set $\mathcal{A}$ and an estimated active set $\hat{\mathcal{A}}$,  consider an error event $E_1$  defined as 
\begin{equation} \vspace{-0.2cm}
    E_1 := \left\{\left| \mathcal{A} \setminus \hat{\mathcal{A}}\right| >  k\left(1-\frac{\eta}{100}\right) \right\}. \label{errev}
\end{equation}  We have $\eta \%$-partial recovery, if \vspace{-0.4cm} \begin{equation} 
    \mathbb{P}_{e,\eta}^{(\ell)}:= P( E_1)  \rightarrow 0 \text { as } \ell \rightarrow \infty, \nonumber
\end{equation} 
where $\mathbb{P}_{e,\eta}^{(\ell)}$ denotes the probability of error for $\eta \% -$ partial recovery.  Probability  of successful identification for $\eta \% -$  recovery is defined as $\mathbb{P}_{succ,\eta}^{(\ell)}:= 1 -\mathbb{P}_{e,\eta}^{(\ell)}.$

\end{definition}
\begin{definition}
\textbf{(Multi-stage Minimum User Identification Cost)}: The minimum user identification
cost for $m$-stage non-coherent $(\ell,k)-$MnAC is said to be $n^{m}(\ell):=\sum_{j=1}^{m} n_j(\ell)$ if for every $0 < \epsilon <1$, there exists a set of preambles $\{\boldsymbol{X}^{i,j}: i \in \mathcal{D}, j \in \{1,\ldots,m\}\}$ of total length $n_o = (1 + \epsilon)n^{m}(\ell)$  such that the probability
of erroneous identification $\mathbb{P}_e^{(\ell)}$  tends to 0 as $\ell \rightarrow \infty$, whereas
$\mathbb{P}_e^{(\ell)}$  is strictly bounded away from 0 if
$n_0 = (1 - \epsilon)n^{m}(\ell)$.

\end{definition}

Our goal in this paper is to characterize the $m$-stage minimum user identification cost theoretically and identify practical strategies that get close to those bounds. Note that Def. 4 does not take the feedback cost  into account  and only focuses on the number of channel uses required in the forward link (uplink) from devices to the BS. Hence, without loss of generality, in obtaining theoretical bounds,  we can assume $\psi_j(\boldsymbol{Z^1},\ldots,\boldsymbol{Z^{j-1}}) = \boldsymbol{Z}^{j}, \forall j \in \{1,\ldots, m\}$.  This approach is consistent with the information theoretical analysis for communication channels with feedback \cite{10.5555/1146355}. However, when we devise practical schemes in Section IV, 
 we take the feedback cost into account and present feedback schemes whose overhead is much smaller compared to the required number of channel-uses in the uplink. Furthermore, our numerical results in Section IV-C suggests that even when the feedback cost is incorporated into the user identification cost, practical multi-stage strategies are very effective and can operate close to the theoretical bounds established in Section III.
	\section{minimum user identification cost for multi-stage non-coherent $(\ell, k)-$MnAC	with feedback}\label{sec3}
	
 In this section, we analyze the minimum user identification cost for the multi-stage non-coherent $(\ell,k)-$ MnAC with feedback and provide an exact characterization in the $k=\Theta(1)$ regime.  

 \subsection{Equivalent channel model and one-stage strategies}

 \begin{figure}   
	\centering
	\resizebox{4.7in}{!}{%
\tikzset{every picture/.style={line width=1.4pt}} 

\begin{tikzpicture}[x=.9pt,y=.85pt,yscale=-1,xscale=1.1]

\draw   (59.92,86.98) -- (92.32,86.98) -- (92.32,119.45) -- (59.92,119.45) -- cycle ;
\draw   (59.92,131.79) -- (92.32,131.79) -- (92.32,167.63) -- (59.92,167.63) -- cycle ;
\draw   (59.92,221.39) -- (92.32,221.39) -- (92.32,257.23) -- (59.92,257.23) -- cycle ;
\draw  [dash pattern={on 1.84pt off 2.51pt}]  (76.25,175.25) -- (76.77,216.83) ;
\draw    (92.82,145.23) -- (145.76,145.95) ;
\draw [shift={(147.76,145.97)}, rotate = 180.78] [color={rgb, 255:red, 0; green, 0; blue, 0 }  ][line width=0.75]    (10.93,-3.29) .. controls (6.95,-1.4) and (3.31,-0.3) .. (0,0) .. controls (3.31,0.3) and (6.95,1.4) .. (10.93,3.29)   ;
\draw    (91.83,239.31) -- (144.78,240.03) ;
\draw [shift={(146.78,240.06)}, rotate = 180.78] [color={rgb, 255:red, 0; green, 0; blue, 0 }  ][line width=0.75]    (10.93,-3.29) .. controls (6.95,-1.4) and (3.31,-0.3) .. (0,0) .. controls (3.31,0.3) and (6.95,1.4) .. (10.93,3.29)   ;
\draw    (92.82,107.59) -- (145.76,108.31) ;
\draw [shift={(147.76,108.34)}, rotate = 180.78] [color={rgb, 255:red, 0; green, 0; blue, 0 }  ][line width=0.75]    (10.93,-3.29) .. controls (6.95,-1.4) and (3.31,-0.3) .. (0,0) .. controls (3.31,0.3) and (6.95,1.4) .. (10.93,3.29)   ;
\draw   (155.99,80.56) -- (262.26,80.56) -- (262.26,259.77) -- (155.99,259.77) -- cycle ;
\draw    (263.57,96.84) -- (285.26,96.7) ;
\draw [shift={(287.26,96.69)}, rotate = 179.64] [color={rgb, 255:red, 0; green, 0; blue, 0 }  ][line width=0.75]    (10.93,-3.29) .. controls (6.95,-1.4) and (3.31,-0.3) .. (0,0) .. controls (3.31,0.3) and (6.95,1.4) .. (10.93,3.29)   ;
\draw    (263.57,116.55) -- (285.26,116.42) ;
\draw [shift={(287.26,116.4)}, rotate = 179.64] [color={rgb, 255:red, 0; green, 0; blue, 0 }  ][line width=0.75]    (10.93,-3.29) .. controls (6.95,-1.4) and (3.31,-0.3) .. (0,0) .. controls (3.31,0.3) and (6.95,1.4) .. (10.93,3.29)   ;
\draw    (263.57,250.06) -- (285.26,249.92) ;
\draw [shift={(287.26,249.91)}, rotate = 179.64] [color={rgb, 255:red, 0; green, 0; blue, 0 }  ][line width=0.75]    (10.93,-3.29) .. controls (6.95,-1.4) and (3.31,-0.3) .. (0,0) .. controls (3.31,0.3) and (6.95,1.4) .. (10.93,3.29)   ;
\draw  [dash pattern={on 0.84pt off 2.51pt}]  (310.95,137.91) -- (310.95,147.77) -- (310.95,158.52) ;
\draw  [dash pattern={on 0.84pt off 2.51pt}]  (280.95,50.91) -- (280.95,248.52) ;
\draw   (468.87,104.91) .. controls (468.87,92.53) and (479.92,82.5) .. (493.55,82.5) .. controls (507.18,82.5) and (518.22,92.53) .. (518.22,104.91) .. controls (518.22,117.28) and (507.18,127.31) .. (493.55,127.31) .. controls (479.92,127.31) and (468.87,117.28) .. (468.87,104.91) -- cycle ;
\draw   (470.85,235.73) .. controls (470.85,223.35) and (481.89,213.32) .. (495.52,213.32) .. controls (509.15,213.32) and (520.2,223.35) .. (520.2,235.73) .. controls (520.2,248.1) and (509.15,258.13) .. (495.52,258.13) .. controls (481.89,258.13) and (470.85,248.1) .. (470.85,235.73) -- cycle ;
\draw    (321.81,91.32) -- (473.41,249.36) ;
\draw [shift={(474.79,250.81)}, rotate = 226.19] [color={rgb, 255:red, 0; green, 0; blue, 0 }  ][line width=0.75]    (10.93,-3.29) .. controls (6.95,-1.4) and (3.31,-0.3) .. (0,0) .. controls (3.31,0.3) and (6.95,1.4) .. (10.93,3.29)   ;
\draw    (322.79,118.2) -- (473.29,249.49) ;
\draw [shift={(474.79,250.81)}, rotate = 221.1] [color={rgb, 255:red, 0; green, 0; blue, 0 }  ][line width=0.75]    (10.93,-3.29) .. controls (6.95,-1.4) and (3.31,-0.3) .. (0,0) .. controls (3.31,0.3) and (6.95,1.4) .. (10.93,3.29)   ;
\draw    (324.77,250.81) -- (472.79,250.81) ;
\draw [shift={(474.79,250.81)}, rotate = 180] [color={rgb, 255:red, 0; green, 0; blue, 0 }  ][line width=0.75]    (10.93,-3.29) .. controls (6.95,-1.4) and (3.31,-0.3) .. (0,0) .. controls (3.31,0.3) and (6.95,1.4) .. (10.93,3.29)   ;
\draw    (322.79,118.2) -- (472.82,91.66) ;
\draw [shift={(474.79,91.32)}, rotate = 169.97] [color={rgb, 255:red, 0; green, 0; blue, 0 }  ][line width=0.75]    (10.93,-3.29) .. controls (6.95,-1.4) and (3.31,-0.3) .. (0,0) .. controls (3.31,0.3) and (6.95,1.4) .. (10.93,3.29)   ;
\draw    (321.81,91.32) -- (472.79,91.32) ;
\draw [shift={(474.79,91.32)}, rotate = 180] [color={rgb, 255:red, 0; green, 0; blue, 0 }  ][line width=0.75]    (10.93,-3.29) .. controls (6.95,-1.4) and (3.31,-0.3) .. (0,0) .. controls (3.31,0.3) and (6.95,1.4) .. (10.93,3.29)   ;
\draw    (324.77,250.81) -- (473.42,92.77) ;
\draw [shift={(474.79,91.32)}, rotate = 133.25] [color={rgb, 255:red, 0; green, 0; blue, 0 }  ][line width=0.75]    (10.93,-3.29) .. controls (6.95,-1.4) and (3.31,-0.3) .. (0,0) .. controls (3.31,0.3) and (6.95,1.4) .. (10.93,3.29)   ;
\draw   (301.08,80.56) -- (319.83,80.56) -- (319.83,259.77) -- (301.08,259.77) -- cycle ;
\draw  [dash pattern={on 0.84pt off 2.51pt}]  (310.95,200.63) -- (310.95,210.49) -- (309.96,233.78) ;
\draw    (322.79,171.96) -- (473.03,92.25) ;
\draw [shift={(474.79,91.32)}, rotate = 152.05] [color={rgb, 255:red, 0; green, 0; blue, 0 }  ][line width=0.75]    (10.93,-3.29) .. controls (6.95,-1.4) and (3.31,-0.3) .. (0,0) .. controls (3.31,0.3) and (6.95,1.4) .. (10.93,3.29)   ;
\draw    (322.79,171.96) -- (473.02,249.89) ;
\draw [shift={(474.79,250.81)}, rotate = 207.42] [color={rgb, 255:red, 0; green, 0; blue, 0 }  ][line width=0.75]    (10.93,-3.29) .. controls (6.95,-1.4) and (3.31,-0.3) .. (0,0) .. controls (3.31,0.3) and (6.95,1.4) .. (10.93,3.29)   ;
\draw  [fill={rgb, 255:red, 155; green, 155; blue, 155 }  ,fill opacity=0.23 ] (119.47,103.11) .. controls (119.47,79.36) and (138.72,60.1) .. (162.48,60.1) -- (402.17,60.1) .. controls (425.93,60.1) and (445.18,79.36) .. (445.18,103.11) -- (445.18,232.14) .. controls (445.18,255.89) and (425.93,275.15) .. (402.17,275.15) -- (162.48,275.15) .. controls (138.72,275.15) and (119.47,255.89) .. (119.47,232.14) -- cycle ;

\draw (165,104.4) node [anchor=north west][inner sep=0.75pt]  [font=\huge]  {$\frac{\sum _{i=1}^{k}X_{t}^{a_{i},j}}{\sqrt{P}}$};
\draw (156,189) node [anchor=north west][inner sep=0.75pt]  [font=\large] [align=left] {{\large \ \ \ Hamming  }\\[-10pt] {\large \ \ \ weight $(V)$}\\[-10pt] {\large \  computation}};
\draw (28,32) node [anchor=north west][inner sep=0.75pt]  [font=\small] [align=left] {{\large Channel Input}\\{\large  $\displaystyle \ \ \ \ \ \ \ \ \Tilde{\textbf{X}}_t(j)$}};

\draw (220.53,20.96) node [anchor=north west, inner sep=0.75pt, font=\large] {
    \begin{minipage}{4cm} 
        Hamming weight \\[-12pt]
        $\text{\hspace{1.8cm}}V$
    \end{minipage}
};

\draw (487.53,95.96) node [anchor=north west][inner sep=0.75pt]  [font=\LARGE]  {$0$};
\draw (488.53,225.39) node [anchor=north west][inner sep=0.75pt]  [font=\LARGE]  {$1$};
\draw (447,32) node [anchor=north west][inner sep=0.75pt]  [font=\large] [align=left] {{\large Channel Output}\\[-10pt]{\large $\displaystyle \ \ \ \ \ \ \ \ Z_t^j$}};
\draw (48.2,284.73) node [anchor=north west][inner sep=0.75pt]  [font=\large]  {$active\ user\ set:\ \{a_{1} ,a_{2} ,\dotsc ,a_{k} \}$};
\draw (306.92,168.09) node [anchor=north west][inner sep=0.75pt]    {$i$};
\draw (368.51,123.22) node [anchor=north west][inner sep=0.75pt]  [font=\large,rotate=-332.77]  {$p_v^t(j)$};
\draw (373.93,196.77) node [anchor=north west][inner sep=0.75pt]  [font=\large,rotate=-27.2]  {$1-p_v^t(j)$};
\draw (348.88,282.21) node [anchor=north west][inner sep=0.75pt]  [font=\Large]  {$p_v^t(j) =1-e^{-\frac{\gamma_t^j}{v\sigma ^{2}P +\sigma _{w}^{2}}}$};
\draw (58.92,90.38) node [anchor=north west][inner sep=0.75pt]  [font=\large]  {$X^{a_{1},j}_t$};
\draw (59.03,226.07) node [anchor=north west][inner sep=0.75pt]  [font=\large]  {$X^{a_{k},j}_t$};
\draw (58.74,135.53) node [anchor=north west][inner sep=0.75pt]  [font=\large]  {$X^{a_{2},j}_t$};
\draw (306.92,240.67) node [anchor=north west][inner sep=0.75pt]    {$k$};
\draw (305.94,110.75) node [anchor=north west][inner sep=0.75pt]    {$1$};
\draw (305.94,88.35) node [anchor=north west][inner sep=0.75pt]    {$0$};
\end{tikzpicture}
}
	\caption{Equivalent characterization of channel in  Fig. 2 corresponding to stage $j$ of multi-stage non-coherent MnAC  with only the active users as inputs. The channel input ${\tilde{\boldsymbol{X}}_t}{(j)}$ is a function of activity status vector $\boldsymbol{\beta}$ and the feedback from previous stages $\{\psi_1,\ldots,\psi_{j-1}\}$ as shown in Fig. 3. }
	\label{fig:eqchnofb}
	\end{figure}
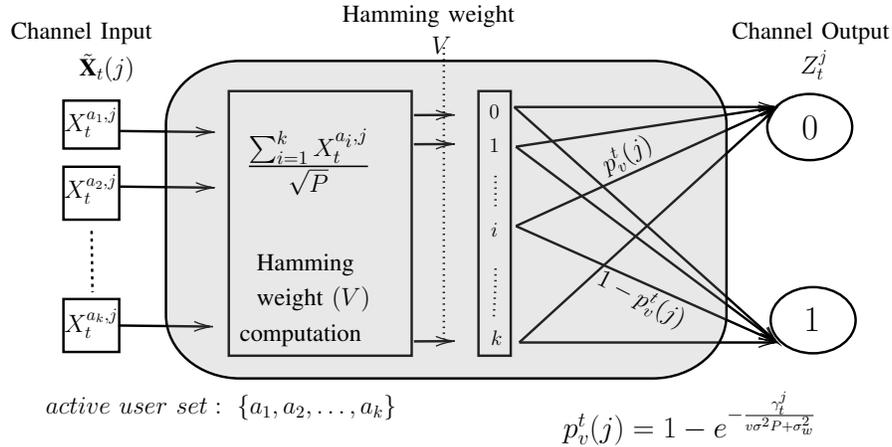

In our prior work \cite{10198448}, we considered a single stage non-coherent $(\ell,k)-$ MnAC without feedback and established that the active device identification problem can be viewed as a  decoding problem in an equivalent point-to-point  channel whose inputs correspond to the active users. Similarly,  in the case of $m$-stage $(\ell,k)-$MnAC, each channel-use $t$ of stage $j$ (see Fig. \ref{fig:qtzn}) can be  translated to an equivalent point to point communication channel  with $\gamma_t^j$ as its detector threshold  as shown in Fig. \ref{fig:eqchnofb}.   

Note that since the inactive users remain `Off', as shown in Fig. \ref{fig:eqchnofb}, the effective input to this equivalent channel during the $t^{th}$  channel-use of stage $j$ is ${\tilde{\boldsymbol{X}}_t}{(j)} =(X^{a_1,j}_{t}, \ldots X^{a_k,j}_{t})$ as defined in Section II.A, where $t \in \{1,\ldots ,n_j\},  j \in \{1,\ldots ,m\}$. This equivalent channel in stage $j$ can be interpreted as a cascade of two channels; During channel-use $t$ of stage $j$, the first channel computes the Hamming weight $V_t^j$ of the effective input ${\tilde{\boldsymbol{X}}_t}{(j)}$ whereas the second channel translates the Hamming weight $V^j_t$ into a binary output $Z^j_t$ depending on the fading statistics $(\sigma^2)$, noise variance $(\sigma_w^2)$ of the wireless channel and the non-coherent detector threshold $\gamma_t^j$. Here onwards, we drop the index $j$ for brevity wherever it is clear from the context.  We showed in \cite{10198448} that the transition probability during stage $j$ of this equivalent channel $p_v^t(j) := p\left(Z_t =z\mid {\tilde{\boldsymbol{X}}_t} ={\boldsymbol{x}}, V_t=v\right)$ is given by 
 \begin{equation}
     p_v^t(j) = (1-z) \left(1-e^{-\frac{\gamma_t^j}{v \sigma^{2}P+\sigma_{w}^{2}}}\right)+ z\left(e^{-\frac{\gamma_t^j}{v \sigma^{2}P+\sigma_{w}^{2}}}\right). \label{pvj}
 \end{equation} Clearly, we have  the Markov chain ${\tilde{\textbf{X}}_t} \rightarrow V_t\rightarrow Z_t, \forall t \in \{1,\ldots,n_j\};\forall j \in \{1,\ldots,m\}$. Exploiting this property, in \cite{10198448}, we derived the maximum rate and minimum user identification cost for the single stage non-coherent $(\ell,k)-$MnAC without feedback, denoted by $C^1$ and $n^1(\ell)$ respectively where the superscript indicates the number of stages (see Def. 4), as summarized in  Theorem \ref{thm1.1} below. 

\begin{theorem}[\cite{10198448}]
For a single stage $(\ell,k)$-MnAC without feedback, the maximum rate of the equivalent point-to-point channel in Fig. \ref{fig:eqchnofb} is \begin{equation} 
    C^1=\max_{(\gamma^1,q^1)} h\Big(E\big[e^{-\frac{\gamma^1}{V  \sigma^{2}P+\sigma_{w}^{2}}}\big]\Big)-E\Big[h\big(e^{-\frac{\gamma^1}{V  \sigma^{2}P+\sigma_{w}^{2}}}\big)\Big].  \label{singlestagecap}
\end{equation}
 where $\sigma^2$ denotes the fading statistics and $\sigma_w^2$ denotes the noise variance.
Furthermore, the minimum  user identification cost $n^{1}(\ell)$ when $k=\Theta(1)$  is given by
 
\begin{equation} 
    n^1(\ell)= \frac{k\log(\ell)}{\max_{(\gamma^1,q^1)} h\Big(E\big[e^{-\frac{\gamma^1}{V  \sigma^{2}P+\sigma_{w}^{2}}}\big]\Big)-E\big[h\big(e^{-\frac{\gamma^1}{V  \sigma^{2}P+\sigma_{w}^{2}}}\big)\Big]}
    \label{eq7}
\end{equation}
where $E(\cdot)$ denotes expectation w.r.t  $V$,  $h(x)=-x \log x-(1-x) \log (1-x)$ is the binary entropy function, $\gamma^1$ denotes the energy detector threshold and $q^1$ denotes the sampling probability used by all active users for preamble generation; i.e., $q_t^{i,1} =q^1, \forall i \in \mathcal{A}$.
\label{thm1.1}
\end{theorem}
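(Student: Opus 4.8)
<br>

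The plan is to recognize Theorem~\ref{thm1.1} as essentially a translation of the classical point-to-point channel coding theorem (with feedback, which does not increase capacity for memoryless channels) into the language of active device identification, combined with the equivalent channel characterization already established in Fig.~\ref{fig:eqchnofb}. The starting point is the observation that the equivalent channel in stage $j$ (here $m=1$, so just stage $1$) is a discrete memoryless channel whose input is the tuple $\tilde{\boldsymbol{X}}_t$ of the $k$ active users' symbols, which factors through the Hamming weight $V_t$ via the Markov chain $\tilde{\boldsymbol{X}}_t \to V_t \to Z_t$ with transition law $p_v^t(1)$ given in~\eqref{pvj}. Since all active users employ the same i.i.d.\ $\mathrm{Bern}(q^1)$ sampling, $V_t \sim \mathrm{Binomial}(k, q^1)$, so the channel is fully parametrized by $(\gamma^1, q^1)$, and its capacity (maximizing the mutual information $I(\tilde{\boldsymbol{X}}_t; Z_t)$, which by the data-processing/Markov structure equals $I(V_t; Z_t)$) is exactly the expression $C^1$ in~\eqref{singlestagecap}: writing $Z_t \mid V_t=v \sim \mathrm{Bern}(1-e^{-\gamma^1/(v\sigma^2 P + \sigma_w^2)})$, one has $I(V;Z) = h(\mathbb{E}[e^{-\gamma^1/(V\sigma^2 P+\sigma_w^2)}]) - \mathbb{E}[h(e^{-\gamma^1/(V\sigma^2 P+\sigma_w^2)})]$ by expanding $H(Z)$ and $H(Z\mid V)$.

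Next I would set up the identification-to-decoding reduction. The active set $\mathcal{A}$ is one of $\binom{\ell}{k}$ equally likely messages, so identifying it requires conveying $\log\binom{\ell}{k}$ nats of information through the equivalent channel over $n_1(\ell)$ channel uses. For the achievability (direct) part: fix the capacity-achieving $(\gamma^1, q^1)$, generate each user's length-$n$ preamble i.i.d.\ $\mathrm{Bern}(q^1)$ on the ``on'' positions (inactive users stay off), and let the BS do joint typicality / maximum-likelihood decoding over the $\binom{\ell}{k}$ candidate sets against the observed $\boldsymbol{Z}^1$. A standard random-coding union bound over the $\binom{\ell}{k}-1$ incorrect hypotheses shows $\mathbb{P}_e^{(\ell)}\to 0$ provided $n > (1+\epsilon)\log\binom{\ell}{k}/C^1$; since $k=\Theta(1)$, $\log\binom{\ell}{k} = k\log\ell\,(1+o(1))$, giving the achievable cost $k\log(\ell)/C^1$, i.e.\ \eqref{eq7}. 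For the converse, Fano's inequality plus the fact that feedback does not increase the capacity of a memoryless channel gives $\log\binom{\ell}{k} \le n_1(\ell)\,C^1 + 1 + \mathbb{P}_e^{(\ell)}\log\binom{\ell}{k}$, so if $n_1(\ell) < (1-\epsilon)k\log(\ell)/C^1$ then $\mathbb{P}_e^{(\ell)}$ is bounded away from $0$; this matches Def.~4. Since this theorem is attributed to~\cite{10198448}, I would largely cite that argument, restating the reduction and the two bounds at the level of a proof sketch rather than re-deriving the random-coding calculation in full.

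The subtle points I would be most careful about are: (i) the input alphabet here is not a single user's symbol but the $k$-tuple $\tilde{\boldsymbol{X}}_t$, and one must justify that restricting to symmetric i.i.d.\ $\mathrm{Bern}(q^1)$ across active users is without loss of optimality for the identification problem — this uses the permutation symmetry of the prior on $\mathcal{A}$ and the fact that only the induced distribution of $V_t$ matters for the channel, so the optimization collapses to a scalar $q^1$; (ii) handling the constraint in Def.~1 that $\hat{\boldsymbol{\beta}}$ has Hamming weight exactly $k$ — this is automatic if the decoder ranges only over weight-$k$ candidates; and (iii) the asymptotic bookkeeping $\log\binom{\ell}{k} = k\log\ell + O(1)$ valid precisely because $k=\Theta(1)$, which is where the regime hypothesis enters and why the clean closed form~\eqref{eq7} holds only here. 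The main obstacle, conceptually, is the achievability analysis: a naive union bound over all $\binom{\ell}{k}$ subsets must be refined by conditioning on the overlap size $|\mathcal{A}\cap\hat{\mathcal{A}}|$, since subsets sharing many active users with the truth are harder to distinguish; one shows the dominant error term comes from subsets differing in a single element, and these are still sufficiently separated in the typicality metric once $n \gtrsim k\log\ell / C^1$. Everything else is routine once the equivalent-channel viewpoint of Fig.~\ref{fig:eqchnofb} and the capacity computation are in place.
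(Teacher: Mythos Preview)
Your proposal is correct and aligned with the paper's treatment: the paper does not prove Theorem~\ref{thm1.1} here at all but simply cites it from the prior work~\cite{10198448}, and your sketch (equivalent-channel reduction via the Markov chain $\tilde{\boldsymbol{X}}_t \to V_t \to Z_t$, capacity computation $I(V;Z)$, Fano converse, and random-coding achievability with the overlap-size refinement) accurately reconstructs the argument that reference contains. The only minor remark is that your opening parenthetical about feedback is unnecessary for Theorem~\ref{thm1.1} itself, which concerns the single-stage \emph{no-feedback} case; the feedback-irrelevance argument belongs to Theorem~\ref{thm3} instead.
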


Although in \cite{10198448} we considered a fixed threshold $\gamma^1$ for all channel-uses in the single-stage MnAC, the same $C^1$ and $n^1(\ell)$ as in Theorem \ref{thm1.1}  hold for the case where the threshold $\gamma_t^1$ depends on the channel-use index $t$ since we are optimizing over the parameter $\gamma^1$ in (\ref{eq7}). The same rationale applies to the sampling probabilities ${q}_t^{i,1}, \forall i \in \mathcal{A}$. In \cite{10198448}, we established that having the same sampling probability $q_t^{i,1}$ across all active users $ i \in \mathcal{A}$ leads to the maximum rate $C^1$ due to the symmetry of the problem. 

Using ideas from GT \cite{8926588}, in our prior work \cite{10198448}  we showed that the above minimum user identification cost is indeed attainable when the number of active devices, $k$ scales as $k =\Theta(1)$. In addition to this theoretical achievability, we also proposed several practical strategies for  active device identification in non-coherent $(\ell, k)-$MnAC \cite{10198448}. However, we observed a notable gap in the performance of these strategies compared to the theoretical minimum user identification cost. Furthermore, we noted that the relaxation of the exact recovery criterion to a partial recovery one (see Def. 3) improved the user identification costs associated with the practical strategies. Motivated by this observation, in Sections IV and V, we consider a practical multi-stage user identification approach with feedback for active device identification in which we aim to successively refine the partial estimates through multiple stages, eventually leading to an exact recovery after the final stage of processing. Before we discuss these practical approaches, we characterize the theoretical minimum multi-stage user identification cost for an $m$-stage non-coherent $(\ell,k)-$MnAC in the remainder of this section.

 \subsection{Multi-stage minimum user identification cost}

 For the multi-stage setting, note that the equivalent channel in Fig. \ref{fig:eqchnofb} represented as $p_v^t(j)$ in (\ref{pvj}) where $V_t^j=v$ is the Hamming weight  of  ${\tilde{\boldsymbol{X}}_t}{(j)}$ is memoryless both within a stage and across stages since the distribution of the channel output ${Z}_t$ for any channel-use $t$ depends only on the current input  ${\tilde{\textbf{X}}_t}$ and is conditionally independent of previous channel inputs and outputs.  Shannon \cite{1056798} proved that for a discrete memoryless channel, even if all the received symbols are fed back instantaneously and noiselessly to the transmitter in a causal fashion, the capacity is same as the no-feedback case. We build on this idea to characterize the minimum user identification cost for the $m$-stage non-coherent $(\ell,k)-$ MnAC with stage-wise feedback as given in Theorem \ref{thm3} below.

 \begin{theorem}
The minimum  user identification cost $n^m(\ell)$ of the $m$-stage non-coherent  $(\ell,k)-MnAC$ with feedback in the $k=\Theta(1)$ regime does not depend on $m$ is given by

\begin{equation*} 
    n^m(\ell)= \frac{k\log(\ell)}{\max_{(\gamma,\mathbf{q})} h\left(E\left[e^{-\frac{\gamma}{V  \sigma^{2}P+\sigma_{w}^{2}}}\right]\right)-E\left[h\left(e^{-\frac{\gamma}{V  \sigma^{2}P+\sigma_{w}^{2}}}\right)\right]}
\end{equation*}
\normalsize where $E(\cdot)$ denotes expectation w.r.t  $ V$ and $h(x)=-x \log x-(1-x) \log (1-x)$ is the binary entropy. Here, the maximization is done jointly over the energy detector threshold $\gamma$ and the sampling probability vector  $\boldsymbol{q} := [q^1, q^2, \ldots, q^m]$ such that $q_t^{i,j} =q^j,  \forall i \in \mathcal{A}$.
\label{thm3}
\end{theorem}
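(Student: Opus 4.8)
The plan is to prove the two directions of Def.~4 separately, and to observe at the outset that the single-letter quantity in the denominator, call it $C$, does not involve $m$: with the symmetry constraint $q_t^{i,j}=q^j$ the maximization over $\boldsymbol q=[q^1,\ldots,q^m]$ decouples across stages, so $C=\max_{(\gamma,q)}\big[h\big(E[e^{-\gamma/(V\sigma^2P+\sigma_w^2)}]\big)-E\big[h\big(e^{-\gamma/(V\sigma^2P+\sigma_w^2)}\big)\big]\big]$ with $V\sim\mathrm{Bin}(k,q)$, which is exactly $C^1$ from Theorem~\ref{thm1.1}. Hence it suffices to show $n^m(\ell)=k\log(\ell)/C$. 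Achievability is then immediate: since $C=C^1$, the GT-based single-stage scheme of \cite{10198448}, which achieves $\mathbb P_e^{(\ell)}\to 0$ with $(1+\epsilon)k\log(\ell)/C$ channel-uses, can be executed entirely within stage $1$ of the $m$-stage protocol, the remaining $m-1$ stages contributing only $O(1)$ channel-uses and the feedback links left unused; the total length is $(1+\epsilon)k\log(\ell)/C+O(1)$, which for $\ell$ large is at most $(1+\epsilon')n^m(\ell)$ for any $\epsilon'>\epsilon$, so every multiplicative slack above $n^m(\ell)$ is attainable.

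For the converse I would view active device identification as the transmission of the message $\mathcal A$ — uniform over $\binom{\ell}{k}$ values, so $H(\mathcal A)=\log\binom{\ell}{k}=k\log(\ell)\,(1+o(1))$ since $k=\Theta(1)$ — over $N=\sum_{j=1}^m n_j$ uses of the memoryless equivalent channel of Fig.~\ref{fig:eqchnofb} (transition law (\ref{pvj})), with causal noiseless feedback $\psi_j$. Fano's inequality gives $\mathbb P_e^{(\ell)}\ge 1-\big(I(\mathcal A;\mathbf Z^1,\ldots,\mathbf Z^m)+1\big)/\log\binom{\ell}{k}$, so the crux is the bound $I(\mathcal A;\mathbf Z^1,\ldots,\mathbf Z^m)\le NC$. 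Concatenating all outputs into one sequence $Z^N$ and applying the chain rule, $I(\mathcal A;Z^N)=\sum_{i=1}^N I(\mathcal A;Z_i\mid Z^{i-1})$. For each term, the effective Hamming weight $V_i$ is a deterministic function of $\mathcal A$ and the past outputs $Z^{i-1}$ (the preamble used at any channel-use is fixed once $\beta_i$ and the realized feedback are known, and the feedback is a function of past outputs), and by the memorylessness of the equivalent channel both within a stage and across stages, $Z_i$ depends on $(\mathcal A,Z^{i-1})$ only through $V_i$; therefore $H(Z_i\mid Z^{i-1},\mathcal A)=H(Z_i\mid V_i)$, whence $I(\mathcal A;Z_i\mid Z^{i-1})\le H(Z_i)-H(Z_i\mid V_i)=I(V_i;Z_i)$. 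To finish, one argues $I(V_i;Z_i)\le C$: since the $\ell$ preambles are generated independently across users, conditioned on the realized feedback the active users' symbols at channel-use $i$ are independent Bernoulli variables, so the marginal of $V_i$ is (asymptotically, as the finite-population corrections vanish when $\ell\to\infty$) in the Poisson-binomial family on $\{0,\ldots,k\}$, and by the symmetry argument of \cite{10198448} the common-$q$, i.e.\ $\mathrm{Bin}(k,q)$, choice maximizes $h\big(E[e^{-\gamma/(V\sigma^2P+\sigma_w^2)}]\big)-E\big[h\big(e^{-\gamma/(V\sigma^2P+\sigma_w^2)}\big)\big]$ over this class. Summing, $I(\mathcal A;Z^N)\le NC$; setting $N=(1-\epsilon)k\log(\ell)/C$ yields $\mathbb P_e^{(\ell)}\ge 1-\frac{(1-\epsilon)k\log(\ell)+1}{k\log(\ell)(1+o(1))}\to\epsilon>0$, so $\mathbb P_e^{(\ell)}$ is bounded away from $0$, establishing the converse and hence $n^m(\ell)=k\log(\ell)/C$, independent of $m$.

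I expect the single genuine obstacle to be the last step of the converse: rigorously showing $I(V_i;Z_i)\le C$ uniformly in $i$ and over all feedback-based $m$-stage encoders. This is the point at which one must invoke, in effect, Shannon's theorem \cite{1056798} that feedback does not increase the capacity of a memoryless channel — here adapted to the MnAC-to-point-to-point reduction through the observation that conditioning on the feedback preserves the independence-across-users structure of the input, so the class of achievable per-channel-use input distributions is not enlarged by feedback — together with the symmetry/concavity optimization of \cite{10198448} identifying the binomial as the optimizer and with the $\ell\to\infty$ disappearance of the hypergeometric (finite-$\ell$) gap between the achievable input distributions and the binomial family. The remaining ingredients — Fano, the chain rule, the fact that $V_i$ is a deterministic function of $(\mathcal A,Z^{i-1})$, and the estimate $\log\binom{\ell}{k}=k\log(\ell)(1+o(1))$ — are routine.
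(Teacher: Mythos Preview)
Your proposal is correct and follows essentially the same Shannon feedback-capacity argument as the paper: Fano's inequality, chain rule on the outputs, the observation that the channel input at each use is a deterministic function of $(\boldsymbol\beta,Z^{i-1})$, memorylessness to strip the conditioning, and a single-letter bound by $C^1$. The only cosmetic difference is that the paper first decomposes stage-by-stage (bounding $I(\boldsymbol\beta;\mathbf Z^1,\ldots,\mathbf Z^m)\le\sum_j I(\overline{\mathbf X}(j);\mathbf Z^j)$ via the Markov chain $\{\boldsymbol\beta,\mathbf Z^{1:j-1}\}\to\overline{\mathbf X}(j)\to\mathbf Z^j$) and then channel-use-by-channel-use, whereas you go directly to the per-channel-use level and work with $V_i$ rather than $\mathbf X_t(j)$; the paper also dispatches the step you flag as the ``genuine obstacle'' ($I(V_i;Z_i)\le C$) simply by invoking Theorem~\ref{thm1.1}'s identification of $C^1$ as the maximum rate of the equivalent channel, without the Poisson-binomial/hypergeometric discussion.
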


\begin{proof}
Since a single stage non-coherent $(\ell,k)-$MnAC is a special case of $m$-stage non-coherent $(\ell,k)-$MnAC with sampling probability vector $\boldsymbol{q} =[q^1]$, any user identification cost that can be asymptotically achieved using  a single stage can be achieved with $m$-stages by ignoring the feedback. Hence using Theorem \ref{thm1.1}, 
  \begin{equation}
    n^m(\ell) \leq n^1(\ell) = \frac{k\log(\ell)}{\max\limits_{(\gamma, \boldsymbol{q})}h\Big(E\Big[e^{-\frac{\gamma}{V \sigma^{2}P+\sigma_{w}^{2}}}\Big]\Big)-E\Big[h\Big(e^{-\frac{\gamma}{V \sigma^{2}P+\sigma_{w}^{2}}}\Big)\Big]}.
    \label{p1}
\end{equation}

 Now, we  show that $n^m(\ell) \geq  n^1(\ell)$. Without loss of generality, let $\boldsymbol{\beta}$ be uniformly distributed over $\left\{1, \ldots,  {\ell \choose k}\right\}$. Using Fano's inequality and the data-processing inequality, we have 
\begin{equation}
    \begin{aligned}
\log {\ell \choose k}=H(\boldsymbol{\beta}) & =H(\boldsymbol{\beta} \mid \hat{\boldsymbol{\beta}})+I(\boldsymbol{\beta} ; \hat{\boldsymbol{\beta}}) \\
& \leq 1+\mathbb{P}_e^{(\ell)} \log {\ell \choose k}+I\left(\boldsymbol{\beta} ; \boldsymbol{Z}^1,\ldots \boldsymbol{Z}^m\right).
\end{aligned}
\label{ee1}
\end{equation} We can bound $I\left(\boldsymbol{\beta} ; \boldsymbol{Z}^1,\ldots \boldsymbol{Z}^m\right)$ as follows:
\vspace{-1cm}
\begin{subequations}

\begin{align}
    I\left(\boldsymbol{\beta} ; \boldsymbol{Z}^1,\ldots \boldsymbol{Z}^m\right) & = H\left(\boldsymbol{Z}^1,\ldots \boldsymbol{Z}^m\right) - H\left(\boldsymbol{Z}^1,\ldots \boldsymbol{Z}^m \mid \boldsymbol{\beta}\right) \\
    & = H\left(\boldsymbol{Z}^1,\ldots \boldsymbol{Z}^m\right) - \sum_{j=1}^m H\left(\boldsymbol{Z}^j \mid \boldsymbol{Z}^1, \ldots, \boldsymbol{Z}^{j-1}, \boldsymbol{\beta}\right) \\
    & = H\left(\boldsymbol{Z}^1,\ldots \boldsymbol{Z}^m\right) - \sum_{j=1}^m H\left(\boldsymbol{Z}^j \mid \boldsymbol{Z}^1, \ldots, \boldsymbol{Z}^{j-1}, \boldsymbol{\beta}, {\overline{\boldsymbol{X}}(j)}\right) \label{eq:subeq1}\\
    & = H\left(\boldsymbol{Z}^1,\ldots \boldsymbol{Z}^m\right) - \sum_{j=1}^m H\left(\boldsymbol{Z}^j \mid {\overline{\boldsymbol{X}}(j)}\right)\label{eq:subeq2}\\
& \leq \sum_{j=1}^m H\left(\boldsymbol{Z}^j\right)-\sum_{j=1}^m H\left(\boldsymbol{Z}^j \mid {\overline{\boldsymbol{X}}(j)}\right) \\
& =\sum_{j=1}^m I\left({\overline{\boldsymbol{X}}(j)};\boldsymbol{Z}^j\right) \\
& =\sum_{j=1}^m \sum_{t=1}^{n_j} I\left({\textbf{X}_t(j)};{Z}^j_t\right) \\
& \leq  C^1\sum_{j=1}^m n_j, \label{subeq4}
\end{align}

\end{subequations} \noindent where  $C^1$ as given in Theorem 1 represents the maximum rate of the equivalent point-to-point channel corresponding to a single-stage non-coherent $(\ell,k)$-MnAC. In (\ref{eq:subeq1}), we used the fact that ${\overline{\boldsymbol{X}}(j)}$ is a function of $\{\boldsymbol{Z}^1, \ldots, \boldsymbol{Z}^{j-1}\}$ and $\boldsymbol{\beta}$. For (\ref{eq:subeq2}), we used the Markov chain $\{\boldsymbol{\beta},\boldsymbol{Z}^1, \boldsymbol{Z}^2, \ldots, \boldsymbol{Z}^{j-1}\}\rightarrow \overline{\textbf{X}} (j)\rightarrow \boldsymbol{Z}^j$ along with the memoryless nature of the channel. In (\ref{subeq4}), we used the fact that the equivalent channel model established for the single-stage non-coherent $(\ell,k)-$MnAC holds for each stage of the $m$-stage non-coherent $(\ell,k)-$MnAC. Thus, in the $k=\Theta(1)$ regime, $\mathbb{P}_e^{(\ell)}\rightarrow 0$ as $\ell \rightarrow \infty$ suggests that

  \begin{equation}
    \sum_{j=1}^m n_j \geq \frac{k\log(\ell)}{\max\limits_{(\gamma, \boldsymbol{q})}h\Big(E\Big[e^{-\frac{\gamma}{V \sigma^{2}P+\sigma_{w}^{2}}}\Big]\Big)-E\Big[h\Big(e^{-\frac{\gamma}{V \sigma^{2}P+\sigma_{w}^{2}}}\Big)\Big]},
    \label{p2}
\end{equation}

 where we have used $C^1$ in (\ref{singlestagecap}) along with the fact that $\log {\ell \choose k}\approx k \log(\ell)$ in $k =\Theta(1)$ regime. According to Def. 4, 
 $n^m(\ell) \geq \sum_{j=1}^{m}n_j$ for any activity detection scheme for which $\mathbb{P}_e^{(\ell)} \rightarrow 0$. Combining (\ref{p1}) and (\ref{p2}) completes the proof.
  \end{proof}

Theorem \ref{thm3} implies that a multi-stage approach for active device identification does not yield improvements over the single-stage one in terms of the fundamental limits on minimum user identification cost. However,  in practical terms, multi-stage protocols may provide a potential advantage in enhancing the efficiency of active device identification schemes. In the next section, we propose a practical stage-wise adaptive protocol for active device identification in which a partial estimate of the active device set is formed in the initial stage with subsequent refinements in the following stages to achieve exact recovery. 

 \section{Multi-stage active user identification with feedback}

In our multi-stage active user identification scheme, after each stage, the BS infers a partial estimate of the active set of devices while maintaining some false positive and false negative target. Note that since there are potential false positives, the BS individually tests each device in this partial estimate set to confirm their activity status.  The estimate progressively improves across stages and by the end of the final stage, the BS aims for an exact recovery.  Between stages, using a feedback link, the BS communicates to the devices whether they are estimated as active or not. Recall that our theoretical characterization of the minimum user identification cost  in Section III does not include the channel uses required for providing feedback between stages. However, in the analysis of our proposed practical scheme in this section, we examine the overhead associated with the feedback procedure in terms of the required number of feedback channel uses. Specifically, we demonstrate that this feedback overhead is negligible compared to the minimum user identification cost, justifying its omission in our theoretical characterization.

Let $k_j$ denote the number of active devices that remain to be estimated at the beginning of stage $j$. Similarly, $\ell_j$ represents the number of unclassified devices still present in $\mathcal{D}$. Let the set $\hat{{\mathcal{A}}}_{j}$ of size $k_j$ denote the partial estimate of this remaining set of active devices  $\mathcal{A}_j :=\mathcal{A} \setminus \bigcup_{i=0}^{j-1}\hat{\mathcal{A}}_{i}$ inferred by the BS at the end of stage $j$. Note that  $|\mathcal{A}_j|=|\hat{\mathcal{A}}_j|=k_j$. Also, $\ell_j =|\mathcal{D}\setminus\bigcup_{i=1}^{j-1}\hat{{\mathcal{A}}}_{i}|.$ with $\ell_0:=\ell$.  We initialize $\hat{{\mathcal{A}}}_{0} :=\emptyset$.    For notational convenience, without loss of generality, let us assume that $\mathcal{A}=\{1,\ldots,k\}$ is the active device set. Next, we describe the sequence of processing that happens at each stage $j$.
\begin{itemize}
    \item \textbf{Transmission phase}: For each stage $j$, we consider two sets of devices $\mathcal{S}_1^j$ and $\mathcal{S}_2^j$ defined as follows:

    \begin{enumerate}

        \item $\mathcal{S}_1^j:=\hat{{\mathcal{A}}}_{j-1}$ is the estimate set of active devices identified in previous stage $j-1$.
        \item $\mathcal{S}_2^j:=\mathcal{D}\setminus\bigcup_{i=1}^{j-1}\hat{{\mathcal{A}}}_{i}$ denotes the remaining set of unclassified devices in $\mathcal{D}$.
    \end{enumerate}
  For each device $ i \in \mathcal{S}_1^j$, a non-overlapping block of ${n}_{j_1}$ channel-uses denoted by $\mathcal{B}_i^j$ is allocated by the BS at the end of the previous stage $j-1$ using the feedback phase (discussed later) requiring a total of $k_{j-1}n_{j_1}$ channel-uses. The devices in $\mathcal{S}_1^j$ set their sampling probabilities based on the following:
\begin{equation}
  q_{t}^{i,j} =
    \begin{cases}
      1 & \text{if  $i \in \mathcal{A} \bigcap \mathcal{S}_1^j $  and $t \in  \mathcal{B}_i^j$}\\
      
      0 & \text{otherwise}
    \end{cases}     
    \label{s1j}
\end{equation}

  The devices in $\mathcal{S}_2^j$ set their sampling probabilities based on the following:
\begin{equation}
  q_{t}^{i,j} =
    \begin{cases}
      0 & \text{if  $t \leq k_{j-1}{n}_{j_1}$}\\
      
      q^j & \text{if  $i \in \mathcal{A} \bigcap \mathcal{S}_2^j $ and $ k_{j-1}{n}_{j_1}<t\leq n_j$}
    \end{cases}
    \label{s2j}
\end{equation}
Thus, each active device $i$ in $\mathcal{S}_1^j$   transmits an all-one preamble $\boldsymbol{\vec{1}}_{t\in \mathcal{B}_i^j}$  across their allocated ${n}_{j_1}$ channel-uses in the block $\mathcal{B}_i^j$  whereas others remain silent. Thereafter, the active devices in $\mathcal{S}_2^j$ \textit{jointly} transmit the random  binary preamble  based on $Bern(q^j)$ random variable over $n_{j_2}:= n_j -k_{j-1}n_{j_1}$ channel-uses.

\item \textbf{Reception phase}:
 For each device $i \in {\mathcal{S}}_{1}^j$, the BS listens to their orthogonal transmissions over ${n}_{j_1}$ channel-uses to produce a binary output vector $\boldsymbol{Z}^{i,j_1}:=\{{Z}_t^j: t \in  \mathcal{B}_i^j\}$.
Afterward, the BS produces a binary output vector of length $n_{j_2}$ corresponding to the joint transmission by devices in $\mathcal{S}_2^j$ given by $\boldsymbol{Z}^{j_2}:=\{Z_t^j: k_{j-1}n_{j_1} < t \leq n_{j}\}.$  Note that the BS employs threshold-based non-coherent energy detectors  as described in Section II.A with different thresholds for individual and joint transmissions to produce these sets of binary outputs. 

\item \textbf{Estimation phase}:
Upon reception, the BS classifies each device \(i \in{\mathcal{S}}_{1}^j\) as active or inactive through hypothesis testing using the binary output vector ${\boldsymbol{Z}}^{i,j_1}$.  The block length $n_{j_1}$ is chosen such that  ${P}_e^{(n_{j_1})} \rightarrow 0$.  After that, devices in ${\mathcal{S}}_{1}^j$ do not take part in any further stages of active device identification. 
Next, the BS uses the binary output vectors $\boldsymbol{Z}^{j_2}$  to provide a partial estimate \(\hat{\mathcal{A}}_{j}\) of the set of remaining active devices \( \mathcal{A}_{j}\).  This is done while meeting the partial recovery rate \(\eta_j\) (see Def. 3), ensuring that \(\mathbb{P}_{e,\eta_j}^{(\ell)} \rightarrow 0\) as \(\ell \rightarrow \infty\), with \(k\) replaced by $k_j$. The details of individual hypothesis testing and partial recovery algorithm are provided in Section IV-A and Section IV-B respectively.

 \item \textbf{Feedback phase}: After the estimation phase, the BS broadcasts a  feedback signal $\psi_j$ to the users. The feedback signal  informs the devices in $\hat{\mathcal{A}}_{j}$ about their classification as active and allocates them orthogonal resources for individual hypothesis testing in stage $j+1$ to validate their activity status (see (\ref{s1j})). The  feedback signal also serves the purpose of notifying the remaining unclassified devices in $\mathcal{S}_2^j$ that they will continue their joint preamble transmissions after the hypothesis tests of devices in $\hat{\mathcal{A}}_{j}$ during stage $j+1$ (see (\ref{s2j})). A detailed description of the feedback procedure is provided in Section IV-C.
\end{itemize}

In this manner, we iterate through all of the $m$ stages, gradually refining the set of active users with the help of confirmatory tests and feedback. Note that during the final stage $m$, the partial recovery target $\eta_m$ is set to 100 $\%$ to denote the end of the iterative procedure. The complete protocol is depicted in Algorithm \ref{alg1}. We will numerically evaluate the impact of optimally choosing $\eta_m$'s on minimum user identification cost in Section V.

In the following subsections, we will provide a detailed account of the estimation phase within our active device identification framework.  This phase is characterized by two key steps: individual hypothesis testing for validating the activity status of the partial estimate set from previous stage, and partial recovery  of the remaining active devices based on GT decoding techniques. Following that, we will discuss the feedback procedure.

\begin{algorithm}
\small
\caption{Multi-Stage Active Device Identification with Feedback}
\label{alg:multi-stage-identification-fixed-preambles}
\begin{algorithmic}
\State \textbf{Given}:  $\mathbf{X}^{i,j}, \forall i \in\mathcal{D}; \forall j \in \{1,\ldots,m\}-$Set of preambles;  $\mathcal{D}-$Set of users; $k-$Number of active users; $m$ stages.

\State \textbf{Initialization}: 
 $\eta_j$: Partial recovery rate for stage $j$; Estimate of $\mathcal{A}$: $\hat{\mathcal{A}} = \emptyset$. 
\State  Partial estimate produced by stage $j$: $\hat{\mathcal{A}}_{j} = \emptyset, \forall j \in \{0,\ldots, m\}.$

\For{$j = 1$ to $m$}
\State The remaining set of active devices  $\mathcal{A}_j :=\mathcal{A} \setminus \bigcup_{i=0}^{j-1}\hat{\mathcal{A}}_{i}$. 
\State $k_j := |\mathcal{A}_j|$.
    \State \textbf{{Transmission phase}}: 
     \State \textbf{for} $i \in \mathcal{A} \bigcap \mathcal{S}_1^j:$ transmit an all-one preamble $\boldsymbol{\vec{1}}_{t\in \mathcal{B}_i^j}$ of duration ${n}_{j_1}$. \textbf{ end for}
    \State \textbf{for}    $i \in \mathcal{A} \bigcap \mathcal{S}_2^j $: transmit the   $Bern(q^j)$-preamble  over  channel-uses $\{t: k_{j-1}{n}_{j_1}<t\leq n_j\}$.
    \textbf{ end for}
    \State \textbf{{Reception phase}}:
\State \textbf{for}   $t \in \{1,\ldots, n_{j}\}$, 
  Envelope detection at BS:  $Z_{t}^j=1 \text { if } |S_t|^2>\gamma_t^j ; \text { else } Z_{t}=0.$    
    \textbf{end for}
    \State $\boldsymbol{Z}^{i,j_1}:=\{{Z}_t^j: t\in \mathcal{B}_i^j\}$; $\boldsymbol{Z}^{j_2}:=\{Z_t^j: k_{j-1}n_{j_1} < t \leq n_{j}\}.$
    \vspace{-0.2cm}
    \State \textbf{{Estimation  phase}}:
     \For{ $i \in  \mathcal{S}_1^j:$
 Perform hypothesis testing on $\boldsymbol{Z}^{i,j_1}$ (see Section IV-A).
      \State  \textbf{if} user $i$ is confirmed as active}  add user $i$ to $\hat{\mathcal{A}}$;
        \textbf{end if}
    \EndFor
    \State Use $\mathbf{Z}^{j_2}$ to form the partial estimate $\hat{\mathcal{A}}_j$  such that  $\left| {\hat{\mathcal{A}}_j} \setminus \mathcal{A}_j\right| \leq k_{j}\big(1-\frac{\eta_j}{100}\big)$ (see Section IV-B).
    \vspace{-0.2cm}
    \State \textbf{{Feedback phase}}:
 \State Broadcast feedback signal $\psi_j$ to the users (see Section IV-C).
 \State \textbf{if} $j <m,$ Allocate orthogonal resources $ \mathcal{B}_i^{j+1}$ for each user  $i \in \hat{\mathcal{A}}_{j}$ for stage $j+1$ hypothesis testing. \textbf{end if}

\EndFor
\end{algorithmic}
\label{alg1}
\end{algorithm}
\subsection{Individual hypothesis testing for validation of active devices}
 During stage $j$, each device \(i \in{\mathcal{S}}_{1}^j\), classified as active, transmits an all-one preamble $\boldsymbol{\vec{1}}_{t\in \mathcal{B}_i^j}$  across their allocated ${n}_{j_1}$ channel-uses in the block $\mathcal{B}_i^j$ based on (\ref{s1j}). The corresponding binary output vector ${\boldsymbol{Z}}^{i,j_1}$ of length $n_{j_1}$   is used for hypothesis testing to validate their activity status. Let $H_{z_i}$ denote the Hamming weight of ${\boldsymbol{Z}}^{i,j_1}$.   We consider a majority decoding scheme in which if $H_{z_i} >\frac{n_{j_1}}{2}$, we declare device $i$ as active. An error occurs if any device in ${\mathcal{S}}_{1}^j$ is wrongly validated. Let $P_e^{(n_{j_1})}$ denote the probability than an error occurs during the validation phase. Next, we analyze the probability of error $P_e^{(n_{j_1})}$ associated with this majority decoding scheme.

Let $p_{1 \rightarrow 0} := p\left(Z_t^j =0 | i \in \mathcal{A} \bigcap \mathcal{S}_1^j\right), \forall t \in \mathcal{B}_i^j$  denote the transition probability that an `On' symbol transmitted by an active user in the partial estimate set  $\mathcal{S}_1^j$ is detected as an `Off' symbol at the receiver during  channel-use $t$. Similarly, let $p_{0 \rightarrow 1}:=  p\left(Z_t^j =1 | i \in \mathcal{S}_1^j \setminus  \mathcal{A}\right), \forall t \in \mathcal{B}_i^j$ denote the transition probability that an `Off' symbol corresponding to an inactive user in the partial estimate set  $\mathcal{S}_1^j$ is detected as an `On' symbol at the receiver during channel-use $t$. Note that $p_{1 \rightarrow 0}$ and $p_{0 \rightarrow 1}$ depends on the SNR and the threshold employed at the non-coherent detector similar to the transition probabilities in $(\ref{pvj})$ described in Section III-A. Clearly, $H_{Z_i} \sim Bin(n_{j_1}, 1- p_{1 \rightarrow 0}), \forall i \in \mathcal{A} \bigcap \mathcal{S}_1^j$ and $H_{Z_i} \sim Bin(n_{j_1}, p_{0 \rightarrow 1}), \forall i \in \mathcal{S}_1^j \setminus  \mathcal{A}$. Furthermore, using the majority decoding rule, we note that an active device $i \in \mathcal{A} \bigcap \mathcal{S}_1^j$ is decoded erroneously if $H_{Z_i} \leq  \frac{n_{j_1}}{2}$. Using Chernoff concentration bound  \cite{179368}, $ \forall i \in \mathcal{A} \bigcap \mathcal{S}_1^j$, the probability of error is given by
 \begin{equation}
    \mathbb{P}\left[H_{Z_i} \leq \frac{n_{j_1}}{2} \Big|i \in \mathcal{A} \cap \mathcal{S}_1^j\right] \leq e^{-n_{j_1} D_2\left(\frac{1}{2} \| p_{1 \rightarrow 0}\right)}, 
\end{equation}
where  $D_2(.||.)$ represents the binary KL divergence function \cite{10.5555/1146355}. Similarly,  an inactive device $i \in \mathcal{S}_1^j \setminus  \mathcal{A}$ is decoded as active if $H_{Z_i} >  \frac{n_{j_1}}{2}$. Again, Chernoff concentration bound implies, $\forall i \in \mathcal{S}_1^j \setminus  \mathcal{A}$,  the probability of error is given by 
\begin{equation}
    \mathbb{P}\left[H_{Z_i} > \frac{n_{j_1}}{2} \Big|i \in \mathcal{S}_1^j \setminus  \mathcal{A}\right] \leq e^{-n_{j_1} D_2\left(\frac{1}{2} \| p_{0 \rightarrow 1}\right)}.
\end{equation} 
Let $\rho = \arg\max \limits_{r \in \{ p_{0 \rightarrow 1},  p_{1 \rightarrow 0}\}} {e^{-n_{j_1}D_2\left(\frac{1}{2} \| r\right)}}$. Thus, using union bound and $\rho$, the total probability of error ${P}_e^{(n_{j_1})}$ can be upper bounded as:

\begin{equation}
{P}_e^{(n_{j_1})}\leq k_j \cdot e^{-n_{j_1} D_2\left(\frac{1}{2} \| \rho\right).}
\end{equation} Noting that  $D_2\left(\frac{1}{2} \| \rho\right) = \frac{1}{2} \log _{\frac{1}{4 \rho(1-\rho)}}$ , we have 
\begin{equation}
{P}_e^{(n_{j_1})}\leq k_j \cdot e^{- \frac{n_{j_1}}{2} \log _{\frac{1}{4 \rho(1-\rho)}}}.
\label{cher}
\end{equation}
Note that $\rho$ is bounded away from $0$ and $1$ for any given SNR and detector threshold. Therefore, the right-hand side of (\ref{cher}) tends to zero $n_{j_1} \rightarrow \infty$  when $k_j =\Theta(1)$. Hence, during stage $j$, one can choose $n_{j_1}$ as a slowly monotonically increasing function of $\ell$, for example, $n_{j_1} =\Theta(\log(\log(\ell)))$, to achieve ${P}_e^{(n_{j_1})} \rightarrow 0$ as $\ell \rightarrow \infty.$ Thus, the overall probability of error associated with individual hypothesis testing across all $m$ stages can be driven to zero  using a total of  $\sum_{j=1}^{m}n_{j_1}=\Theta(m\log(\log(\ell)))$ channel uses, which for a finite number of stages $m$, scales as $\Theta(\log(\log(\ell)))$. This will be verified through simulations in Section V.

\subsection{Partial Recovery using Belief Propagation}
In our previous work \cite{10198448}, we presented a GT-based decoder for the exact recovery of active devices which achieves the minimum user identification cost presented in Theorem 1. It is straightforward to extend the same approach for a partial recovery of the active devices. However, we noted that this optimal GT-based decoder is practically infeasible as its implementation requires an exhaustive search over all candidate active user sets. Therefore, in \cite{10198448} we proposed several practical strategies with reduced computational complexity based on noisy-combinatorial matching pursuit (NCOMP) and belief propagation (BP) for the exact recovery as well as partial recovery of active devices. Motivated by this, for each stage $j$ of our multi-state active device identification framework, we employ an extension of the partial recovery strategies in our prior work \cite[Section IV.B] {10198448}  to form $\hat{\mathcal{A}}_j$, a partial estimate of  the remaining set of active devices $\mathcal{A}_j$. This is done  while maintaining a partial recovery rate $\eta_j$ such that  $\left| {\hat{\mathcal{A}}_j} \setminus \mathcal{A}_j\right|  \leq k_j\big(1-\frac{\eta_j}{100}\big)$. Though our proposed framework can incorporate any GT decoding algorithm, in this paper we 
 primarily focus on BP-based algorithms since they are typically superior to NCOMP algorithms and can perform close to the theoretical minimum user identification costs \cite{8926588,10198448}. Nevertheless,  in our numerical simulations in Section V, along with BP algorithm, we also study the performance of NCOMP algorithm for completeness.

For BP based partial recovery, we  consider an approximate bitwise-MAP detection achieved through message passing on a bipartite graph. This graph has all the device nodes from $\mathcal{S}_2^j$ on one side and the binary energy measurements $\{{Z_t}^{j}: t \in \{k_{j-1}n_{j_1},\ldots, n_j\}\}$ on the other side. The priors $\mathbb{P}(\beta_i),\forall i \in \mathcal{S}_2^j$ are initialized to \(\frac{{k_j}}{|\mathcal{S}_2^j|}\) and are updated through message passing in subsequent iterations until convergence. Let \(\mathcal{N}_1(i): =\{t :X_t^i = 1\}\), $\forall$ \(i \in \mathcal{S}_2^j\), and \(\mathcal{N}_2(t): =\{i :X_t^i = 1\}\), $\forall$ \(t \in \{k_{j-1}n_{j_1},\ldots, n_j\}\), denote the neighbor nodes of device \(i\) and channel-use \(t\) in the graph, respectively. Using the standard BP approach \cite{5707018}, the messages \(m_{i \rightarrow t}^{(r)}\) from device node \(i\) to channel-use node \(t\) and the messages \(\hat{m}_{t \rightarrow i}^{(r)}\) from channel-use node \(t\) to device node \(i\) during \(r^{th}\) iteration can be written as:

\begin{equation}
    \begin{gathered}m_{i \rightarrow t}^{(r+1)}\left(\beta_i\right) \propto\left(\frac{{k_j}}{|\mathcal{S}_2^j|} \mathbbm{1}_{\left\{\beta_i=1\right\}}+\left(1-\frac{{k_j}}{|\mathcal{S}_2^j|}\right) \mathbbm{1}_{\left\{\beta_i=0\right\}}\right) \prod_{t^{\prime} \in \mathcal{N}_1(i) \backslash\{t\}} \widehat{m}_{t^{\prime} \rightarrow i}^{(r)}\left(\beta_i\right) \\ \widehat{m}_{t \rightarrow i}^{(r+1)}\left(\beta_i\right) \propto \sum_{\sim \beta_i} \mathbb{P}\left(Z_t \mid \boldsymbol{\beta}\right) \prod_{i^{\prime} \in \mathcal{N}_2(t) \backslash\{i\}} m_{i^{\prime} \rightarrow t}^{(r)}\left(\beta_i\right),\end{gathered}
    \label{[msgp}
\end{equation}
where $\mathbbm{1}_{\left\{\beta_i=0\right\}}$ and $\mathbbm{1}_{\left\{\beta_i=1\right\}}$ denote the indicator random variables corresponding to the events ${\left\{\beta_i=0\right\}}$ and ${\left\{\beta_i=1\right\}}$ respectively.  After several message passing iterations, we expect to reach a sufficiently accurate estimation of the posterior distribution $\mathbb{P}\left(\beta_i \mid \mathbf{Z}^{j_2}\right)$ at each device node in $\mathcal{D}_j$. We employ  BP with Soft Thresholding (BP-ST) in which we classify the ${k_j}$ users with the highest marginals as active. 

\subsection{Feedback procedure}
After the estimation phase in stage $j$, the BS uses a feedback signal $\psi_j$  to inform  
 the users in $\hat{\mathcal{A}}_j$ that they have been identified as active. Furthermore, $\psi_j$   assigns each of the $k_j$  users in  $\hat{\mathcal{A}}_j$, an index indicating their designated order of transmission as reflected in $ \mathcal{B}_i^j$ in  (\ref{s1j}) for  individual hypothesis testing in stage $j+1$ for validating their activity status.  In addition, $\psi_j$   also notifies the users in $\mathcal{S}_2^j\setminus \hat{\mathcal{A}}_j$ that they can proceed with their joint preamble transmissions in stage $j+1$ once all the $k_j$ users in $\hat{\mathcal{A}}_j$  complete their hypothesis testing phases. 
 
 A straightforward approach to provide this feedback would involve assigning a unique index to each of the $\ell_j$ users. Consequently, the feedback signal $\psi_j$ would be an ordered list of the indices corresponding to the $k_j$ users within $\hat{\mathcal{A}}_j$, wherein the order indicates their transmission sequence. Clearly, the overall feedback overhead for this strategy amounts to $n_{fb} = O\left( \log{\ell_j \choose 
 k_j}\right)$ bits. However, such a strategy incurs considerable overhead when compared with the minimal user identification cost $n^m(\ell)$ in Theorem \ref{thm3}, rendering it impractical.

 In the rest of this section, we propose a feedback procedure that can achieve the aforementioned goals of feedback without overwhelming the channel resources. The key to tackling this challenge  is the observation that an active device  monitoring the feedback link is only interested in knowing whether the transmitted feedback contains the identifier of interest for them. If its identifier is present, the active device infers that it will participate in individual hypothesis testing in the next stage. Conversely, if its identifier is absent, the active device concludes that it has not yet been classified as active and will engage in joint preamble transmission in the next stage.
 
Let  $n_{fb}^j$  denote the number of channel-uses required for the BS to provide feedback during stage $j$. We use $n_{fb}:=\sum_{j=1}^{m}n_{fb}^j$ to denote the overall feedback overhead. For ease of analysis, we assume that the feedback link between the BS and the devices is a  broadcast channel of capacity $C_{fb}$ to each user.

The BS assigns each device $i \in \mathcal{D}$ a predetermined feedback codeword $\boldsymbol{W}_i$ of length $\tilde{n}_j$. The  feedback signal $\boldsymbol{\psi}_j$ is an ordered sequence of the feedback codewords corresponding to the users in $\hat{\mathcal{A}}_j$, i.e., $\boldsymbol{\psi_j} := (\boldsymbol{W})_{i \in \hat{\mathcal{A}}_j}$, where the order uniquely determines the orthogonal block of channel uses allocated to each device in $\hat{\mathcal{A}}_j$ for individual hypothesis testing in stage $j+1$. Each device observes a noisy version of $\boldsymbol{\psi_j}$ due to the effects of the feedback channel and  conducts hypothesis tests over blocks of length $\tilde{n}_j$  to determine if its assigned feedback codeword is present and its position in the ordered sequence $\boldsymbol{\psi}_j$. If its feedback codeword is found, the active device concludes that the BS has correctly classified it as active and uses the position to identify their allocated orthogonal block of channel-uses for hypothesis testing during the next stage. Type-I errors (false negatives) occur when an active device incorrectly infers that its codeword missing in $\boldsymbol{\psi}_j$, while type-II errors (false positives) occur when it incorrectly detects its codeword in $\boldsymbol{\psi}_j$. 

This problem is closely related to the identification via channels problem  introduced by Ahlswede and Dueck \cite{42172,9656742} where the receiver is only interested in knowing whether a particular message was sent, but the transmitter does not know which message is of interest to the receiver. The number  of messages that can be reliably transmitted utilizing an identification code is known to be doubly exponential in the blocklength. Specifically, let $\lambda_1$ and $\lambda_2$ denote the upper bounds to  type-I and type-II error probabilities. It has been shown in \cite{42172} that given any $\lambda_1, \lambda_2, \epsilon>0$, there exist identification codewords such that
\begin{equation}
    \frac{1}{\tilde{n}_j} \log \left(\log \left(\ell_{j}\right)\right)>C_{fb}-\epsilon
\end{equation} Clearly, the number of users grows doubly-exponential  with the blocklength. Thus, for each user   in \( \hat{\mathcal{A}}_j \), an identification codeword of length $\tilde{n}_j < \frac{1}{(C_{fb}-\epsilon)}\log \left(\log \left(\ell_{j}\right)\right)$ can be used as the feedback codeword. In \cite{179339}, an explicit construction of binary constant weight codes that can achieve the identification capacity is provided. These codes are three-layer concatenated codes, where the inner code is a pulse position modulated code and the outer codes are Reed-Solomon codes. Overall,  the feedback overhead for stage $j$ amounts to $n_{fb}^j =(\frac{k_j}{C_{fb}})\log \left(\log \left(\ell_{j}\right)\right)(1+o(1))$ which leads to a total feedback overhead given by 
\begin{equation}
    n_{fb}=\sum_{j=1}^{m}\left(\frac{k_j}{C_{fb}}\right)\log \left(\log \left(\ell_{j}\right)\right)(1+o(1))\\
    =\Theta(\log(\log(\ell))).
    \label{nfbval}
\end{equation} 
In our numerical analysis in Section V, we will use (\ref{nfbval}) to compute the feedback overhead. For $C_{fb}$, we will assume a Gaussian channel with  SNR in the feedback link identical to that of the uplink channel. This is a conservative assumption, considering that the BS typically has a higher power budget and may even enhance the SNR. 

In Section IV-A, we noted that the overhead  associated with individual hypothesis testing scales as $\Theta(\log(\log(\ell)))$ similar to the total feedback overhead given in (\ref{nfbval}). In our prior work \cite{10198448}, we showed that the required number of channel-uses for BP based practical strategies  for exact recovery and partial recovery is  close to the minimum user identification which scales as $\Theta(\log(\ell))$.  Thus, we can conclude that the total number of channel-uses for our multi-stage active user identification scheme, including the feedback costs is $O(\log(\ell))$ which is same as the $m$-stage minimum user identification cost given in Theorem 2.  In the next section, we present  extensive numerical simulations studying the performance of different multi-stage practical schemes. Specifically, we focus on two-stage schemes among the multi-stage schemes for brevity.

\section{Numerical Simulations}

First, we consider the performance of single-stage and two-stage schemes for active user identification in a $(1000,20)-$MnAC setting for SNRs ranging from $0$ to $16$ dB. For the two-stage approach, we analyze the performance of various combinations of NCOMP and BP algorithms. The partial recovery rates were set at $\eta_1 = 75\%$ and $\eta_2 = 100\%$  allowing for at most five false positives in stage-1 and exact recovery in stage-2. After the first stage of joint preamble transmission, the BS forms a partial estimate using BP or NCOMP algorithm and provides feedback to specify which devices are classified as active. In the second stage, each of the $20$ device in the partial estimate set undergoes individual hypothesis testing to validate their activity status. Thereafter, a second stage of joint preamble transmission or equivalently, GT, is performed to exactly recover the remaining active devices. 

The required number of channel-uses for active device identification while employing the above two-stage scheme is illustrated in Fig. \ref{figurea}. The theoretical minimum user identification cost as well as the single-stage BP based user identification cost is also included in Fig. \ref{figurea} for reference. Since the minimum user identification cost does not take into account the feedback channel-uses, the overhead associated with feedback is not considered in Fig. \ref{figurea}. Fig. \ref{figureb}, which we discuss later, compares all component costs including feedback.  Fig. \ref{figurea} shows that  the BP algorithm generally performs much better than NCOMP, which aligns with the findings in our prior work \cite{10198448}. Furthermore, compared to a single-stage approach, the two-stage BP-BP approach consistently provides a significant reduction in the required number of channel-uses for active user identification across all ranges of SNR. For example, at an SNR of $10$ dB, the single-stage BP scheme requires around 440 channel-uses whereas two-stage BP-BP scheme requires around 350 channel-uses.

\begin{figure}   
	\centering
	\begin{tikzpicture}
  \sbox0{\includegraphics[width=.55\linewidth,height=75mm,trim={1.0cm 0.8cm 0 0},clip]{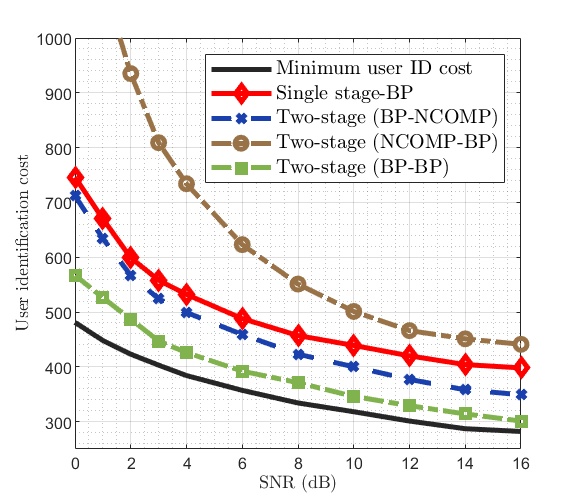}}
  \node[above right,inner sep=0pt] at (0,0)  {\usebox{0}};
  \node[black] at (0.5\wd0,-0.06\ht0) {{SNR (dB)}};
  \node[black,rotate=90] at (-0.04\wd0,0.5\ht0) {{Number of channel-uses $n$}};
\end{tikzpicture}
	\setlength{\abovecaptionskip}{-5pt} 
 \setlength{\belowcaptionskip}{-25pt} 
	\caption{ $(1000,20)$-MnAC for various SNRs with partial recovery rates $\eta_1 = 75\%$ and $\eta_2 = 100\%$ for the two-stage approach. }
\label{figurea}
\end{figure}

 In  Fig. \ref{figureb}, we  consider the same setting as Fig. \ref{figurea} and demonstrate the various costs  associated with the different phases of the proposed two-stage BP-BP user identification scheme including the feedback cost. The different phases and its associated costs are as below.
 \begin{itemize}
     \item  BP-based partial recovery cost: This is the number of channel-uses required to form the partial estimate set using BP in stage 1. In our simulations, we assumed  a partial recovery rate $\eta_1 = 75\%$.
     \item Feedback overhead: The number of feedback channel-uses required to indicate each device if they belong to the set $\hat{\mathcal{A}}_1$. We use (\ref{nfbval}) to compute the feedback overhead $n_{fb}$ assuming a  feedback link over a Gaussian channel with the same average SNR as in the uplink.

     \item  BP-based exact recovery cost: This component corresponds to the number of channel-uses required for the exact recovery of the remaining  up to $25\%$ of the active devices using BP algorithm in stage 2. 
     
     \item Individual hypothesis testing overhead: Each device in  the partial estimate set in stage 1 is subject to individual hypothesis testing for validating their activity status. This is simulated using a majority decoding scheme described in Section IV-A.
 \end{itemize}

 \begin{figure}   
	\centering
	\begin{tikzpicture}
  \sbox0{\includegraphics[width=.65\linewidth,height=75mm,trim={1.5cm 0.8cm 0 1.08cm},clip]{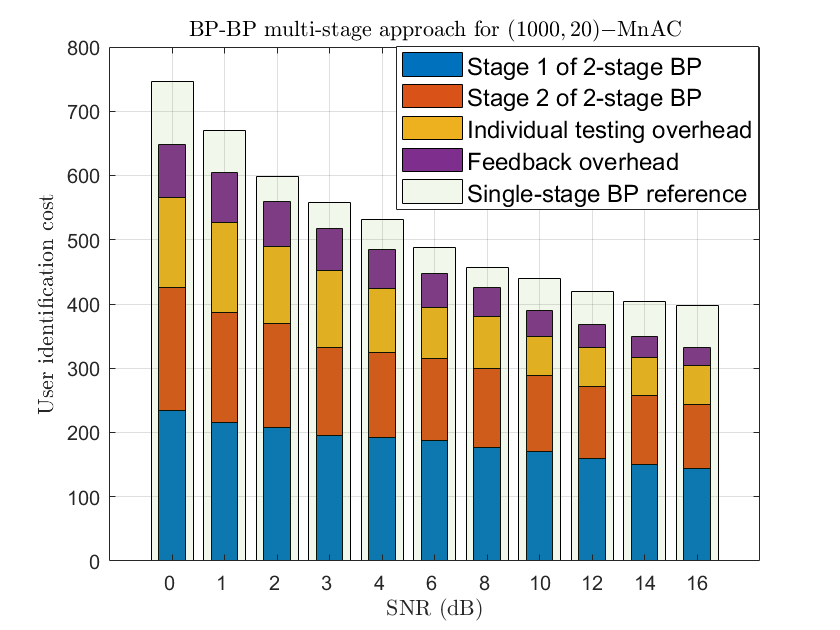}}
  \node[above right,inner sep=0pt] at (0,0)  {\usebox{0}};
  \node[black] at (0.5\wd0,-0.06\ht0) {{SNR (dB)}};
  \node[black,rotate=90] at (-0.04\wd0,0.5\ht0) {{Number of channel-uses $n$}};
\end{tikzpicture}
	\setlength{\abovecaptionskip}{-5pt} 
 \setlength{\belowcaptionskip}{-25pt} 
	\caption{Component costs for BP-BP two-stage user identification scheme in $(1000,20)$-MnAC for various SNRs with partial recovery rates $\eta_1 = 75\%$ and $\eta_2 = 100\%$.}
	\label{figureb}
	\end{figure}

As the SNR increases, the number of channel-uses required for all cost components decrease, gradually reaching saturation at high SNRs. The key observation is that feedback overhead is minimal and is the least costly component among all phases for a given SNR. Specifically, as illustrated in Fig. \ref{figureb}, for moderate to high SNRs, the feedback overhead associated with our two-stage active device identification scheme is in the order of several tens of channel-uses. Since standard LTE or 5G feedback channels typically have  tens to hundreds of channel-uses, incorporating the feedback from our scheme into existing standard protocols is straightforward and can be accommodated within the existing system resources. Fig. \ref{figureb} also reveals that the two-stage BP consistently outperforms single-stage BP for all SNRs even after accounting for feedback overhead.

Next, we study the effect of varying the partial recovery rate of first stage $(\eta_1)$ on the user identification cost. For this, we consider a $(2000, 50)$-MnAC at $10$ dB SNR and vary $\eta_1$ from $90\%$ to $50\%$ as shown in Fig. \ref{figurec}. One key observation is that if the partial recovery rate of stage-1 is too high $(\geq 90\%)$ or too low $(\leq 50\%)$, the two-stage approach seems inefficient compared to the single-stage BP approach. Furthermore, it can be noted that for the above $(2000, 50)$-MnAC at $10$ dB SNR, the most efficient two-stage approach has its partial recovery rate set at $\eta_1 = 70\%$ which amounts to up to 15 false positives in the partial estimate of stage 1.

 \begin{figure}   
	\centering
	\begin{tikzpicture}
  \sbox0{\includegraphics[width=.62\linewidth,height=70mm,trim={1.2cm .8cm 0 1.07cm},clip]{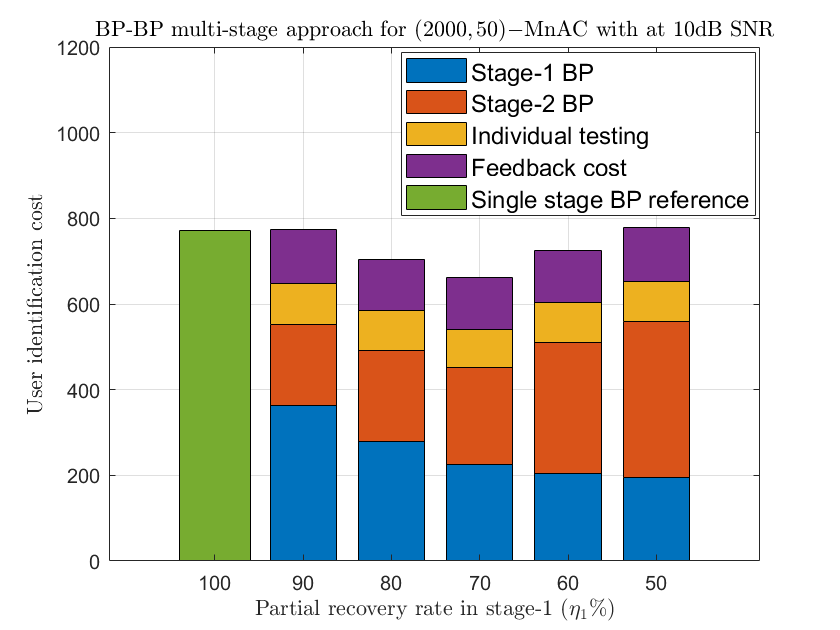}}
  \node[above right,inner sep=0pt] at (0,0)  {\usebox{0}};
  \node[black] at (0.5\wd0,-0.06\ht0) {{Partial recovery rate  in stage-1 ($\eta_1\%$)}};
  \node[black,rotate=90] at (-0.04\wd0,0.5\ht0) {{Number of channel-uses $n$}};
\end{tikzpicture}
	\setlength{\abovecaptionskip}{-5pt}
 \setlength{\belowcaptionskip}{-25pt} 
	\caption{ $(2000,50)$-MnAC for various partial recovery rates $(\eta_1)$ at 10 dB SNR. }
	\label{figurec}
	\end{figure}
\begin{figure}   
	\centering
	\begin{tikzpicture}
  \sbox0{\includegraphics[width=.63\linewidth,height=70mm,trim={1.2cm .8cm 0 1.07cm},clip]{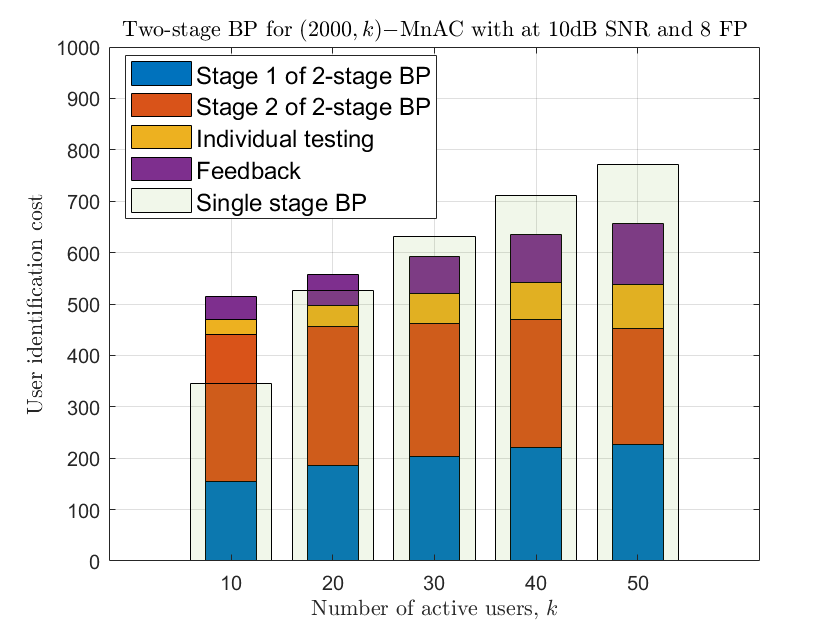}}
  \node[above right,inner sep=0pt] at (0,0)  {\usebox{0}};
  \node[black] at (0.5\wd0,-0.06\ht0) {{Number of active devices, $k$}};
  \node[black,rotate=90] at (-0.04\wd0,0.5\ht0) {{Number of channel-uses $n$}};
\end{tikzpicture}
	\setlength{\abovecaptionskip}{-5pt}
 \setlength{\belowcaptionskip}{-25pt} 
	\caption{ Single-stage BP vs  two-stage BP for $(2000,k)$-MnAC for various values of $k$ at 10 dB SNR with partial recovery rates  $\eta_1 =70\%$ and $\eta_2 =100\%$. }
	\label{figured}
	\end{figure}

 We also analyze our single-stage and two-stage active device identification strategies using BP in a $(2000,k)$-MnAC with the number of active devices $k$ varying from $10$ to $50$ at an SNR of $10$ dB as shown in Fig. \ref{figured}. We observe that for smaller values of $k$, the single-stage approach incurs a lower user identification cost. Essentially, if the problem is sufficiently sparse, there is no necessity to partition it into multiple stages. Conversely, for larger values of $k$, say around $k =30$ and above, the user identification cost of the two-stage BP strategy exhibits significantly lower user identification costs compared to the single-stage BP approach. This suggests that for denser problems, employing a two-stage approach to decompose the problem into a sparse problem can yield savings in terms of the required number of channel-uses for active device identification. In addition, we have numerically analyzed the performance of multi-stage schemes with the number of stages $m$ greater than 2. In general, we observed that the number of stages can be optimized to minimize the number of channel-uses required for user identification. However, due to space constraints, the detailed analysis is omitted here.
 
\section{Conclusions}
In this paper, we have extended our single-stage protocol in \cite{10198448} to propose a multi-stage active device identification framework for non-coherent $(\ell,k)-$MnAC with feedback. Our framework leverages on-off preamble transmission at the device side and threshold-based binary energy detection at the BS side.  We have demonstrated that the active device identification problem in non-coherent $(\ell,k)-$MnAC with feedback can be effectively modeled as a decoding problem in a constrained point-to-point communication channel with feedback. This equivalent modeling  has allowed us to evaluate the performance of our proposed scheme in terms of the minimum user identification cost, representing the minimum number of channel-uses required for reliable active user identification. Specifically, using information theoretic analysis, we have established that the theoretical minimum user identification cost of the non-coherent $(\ell,k)$-MnAC with feedback matches the minimum user identification cost of the single stage non-coherent $(\ell,k)$-MnAC. We have also proposed a practical multi-stage active user identification scheme in which, after each stage, the BS infers a partial estimate of the active set of devices while maintaining some false positive target. The  partial estimate of the active devices  are successively refined across multiple stages using feedback from BS, leading to an exact recovery. We have provided a theoretical characterization of the costs associated with the overheads of feedback and confirmatory hypothesis testing phases within our multi-stage scheme.  We have observed that any efficient scheme for GT-based decoding such as BP algorithm can perform well in a practical multi-stage active device identification framework.  Through several numerical simulations, we have established that BP-based multi-stage strategies can consistently outperform single-stage strategies across a wide range of  SNRs for various choices of partial recovery rates. Thus, our multi-stage active device identification approach demonstrates a lower overall number of channel-uses, or equivalently, reduced resource utilization, despite accounting for feedback overhead and  individual hypothesis testing overhead.

\bibliographystyle{IEEEtranTCOM}

\bibliography{IEEE_TCOM}

\begin{thebibliography}{10}
\baselineskip 12pt
\providecommand{\url}[1]{#1}
\csname url@samestyle\endcsname
\providecommand{\newblock}{\relax}
\providecommand{\bibinfo}[2]{#2}
\providecommand{\BIBentrySTDinterwordspacing}{\spaceskip=0pt\relax}
\providecommand{\BIBentryALTinterwordstretchfactor}{4}
\providecommand{\BIBentryALTinterwordspacing}{\spaceskip=\fontdimen2\font plus
\BIBentryALTinterwordstretchfactor\fontdimen3\font minus \fontdimen4\font\relax}
\providecommand{\BIBforeignlanguage}[2]{{%
\expandafter\ifx\csname l@#1\endcsname\relax
\typeout{** WARNING: IEEEtran.bst: No hyphenation pattern has been}%
\typeout{** loaded for the language `#1'. Using the pattern for}%
\typeout{** the default language instead.}%
\else
\language=\csname l@#1\endcsname
\fi
#2}}
\providecommand{\BIBdecl}{\relax}
\BIBdecl

\bibitem{8454392}
L.~Liu, E.~G. Larsson, W.~Yu, P.~Popovski, C.~Stefanovic, and E.~de~Carvalho, ``Sparse signal processing for grant-free massive connectivity: A future paradigm for random access protocols in the internet of things,'' \emph{IEEE Signal Processing Magazine}, vol.~35, no.~5, pp. 88--99, 2018.

\bibitem{8264818}
Z.~Chen, F.~Sohrabi, and W.~Yu, ``Sparse activity detection for massive connectivity,'' \emph{IEEE Transactions on Signal Processing}, vol.~66, no.~7, pp. 1890--1904, 2018.

\bibitem{robin2021sparse}
J.~Robin and E.~Erkip, ``Sparse activity discovery in energy constrained multi-cluster {IoT} networks using group testing,'' in \emph{2021 IEEE International Conference on Communications (ICC)}, 2021, pp. 1--6.

\bibitem{8187056}
M.~Berioli, G.~Cocco, G.~Liva, and A.~Munari, 2016.

\bibitem{9154288}
U.~K. Ganesan, E.~Björnson, and E.~G. Larsson, ``An algorithm for grant-free random access in cell-free massive mimo,'' in \emph{2020 IEEE 21st International Workshop on Signal Processing Advances in Wireless Communications (SPAWC)}, 2020, pp. 1--5.

\bibitem{9537931}
J.~Choi, J.~Ding, N.-P. Le, and Z.~Ding, ``Grant-free random access in machine-type communication: Approaches and challenges,'' \emph{IEEE Wireless Communications}, vol.~29, no.~1, pp. 151--158, 2022.

\bibitem{7973146}
Z.~Ding, X.~Lei, G.~K. Karagiannidis, R.~Schober, J.~Yuan, and V.~K. Bhargava, ``A survey on non-orthogonal multiple access for 5g networks: Research challenges and future trends,'' \emph{IEEE Journal on Selected Areas in Communications}, vol.~35, no.~10, pp. 2181--2195, 2017.

\bibitem{9060999}
Y.~{Wu}, X.~{Gao}, S.~{Zhou}, W.~{Yang}, Y.~{Polyanskiy}, and G.~{Caire}, ``Massive access for future wireless communication systems,'' \emph{IEEE Wireless Communications}, vol.~27, no.~4, pp. 148--156, 2020.

\bibitem{8734871}
Z.~Chen, F.~Sohrabi, and W.~Yu, ``Multi-cell sparse activity detection for massive random access: Massive {MIMO} versus cooperative {MIMO},'' \emph{IEEE Transactions on Wireless Communications}, vol.~18, no.~8, 2019.

\bibitem{7952810}
Z.~Chen and W.~Yu, ``Massive device activity detection by approximate message passing,'' in \emph{2017 IEEE International Conference on Acoustics, Speech and Signal Processing (ICASSP)}, 2017, pp. 3514--3518.

\bibitem{5695122}
M.~Bayati and A.~Montanari, ``The dynamics of message passing on dense graphs, with applications to compressed sensing,'' \emph{IEEE Transactions on Information Theory}, vol.~57, no.~2, pp. 764--785, 2011.

\bibitem{7282735}
X.~Li, S.~Pawar, and K.~Ramchandran, ``Sub-linear time compressed sensing using sparse-graph codes,'' in \emph{2015 IEEE International Symposium on Information Theory (ISIT)}, 2015, pp. 1645--1649.

\bibitem{10198448}
J.~Robin and E.~Erkip, ``Non-coherent active device identification for massive random access,'' \emph{IEEE Transactions on Communications}, vol.~71, no.~10, pp. 5979--5991, 2023.

\bibitem{8926588}
\BIBentryALTinterwordspacing
M.~{Aldridge}, O.~{Johnson}, and J.~{Scarlett}, \emph{Group Testing: An Information Theory Perspective}, 2019. [Online]. Available: \url{https://ieeexplore.ieee.org/document/8926588}
\BIBentrySTDinterwordspacing

\bibitem{6120391}
C.~L. {Chan}, P.~H. {Che}, S.~{Jaggi}, and V.~{Saligrama}, ``Non-adaptive probabilistic group testing with noisy measurements: Near-optimal bounds with efficient algorithms,'' in \emph{2011 49th Allerton}, Sep. 2011, pp. 1832--1839.

\bibitem{5169989}
D.~Baron, S.~Sarvotham, and R.~G. Baraniuk, ``Bayesian compressive sensing via belief propagation,'' \emph{IEEE Transactions on Signal Processing}, vol.~58, no.~1, pp. 269--280, 2010.

\bibitem{fettweis20145g}
G.~Fettweis and et~al., ``5g: Personal mobile internet beyond what cellular did to telephony,'' \emph{IEEE Communications Magazine}, vol.~52, no.~2, pp. 140--145, 2014.

\bibitem{9031236}
A.~Hindy, U.~Mittal, and T.~Brown, ``{CSI} feedback overhead reduction for 5g massive mimo systems,'' in \emph{2020 10th Annual Computing and Communication Workshop and Conference (CCWC)}, 2020, pp. 0116--0120.

\bibitem{9517965}
J.~Robin and E.~Erkip, ``Capacity bounds and user identification costs in {R}ayleigh-fading many-access channel,'' in \emph{2021 IEEE International Symposium on Information Theory (ISIT)}, 2021, pp. 2477--2482.

\bibitem{7852531}
X.~Chen, T.-Y. Chen, and D.~Guo, ``Capacity of {G}aussian many-access channels,'' \emph{IEEE Transactions on Information Theory}, vol.~63, no.~6, pp. 3516--3539, 2017.

\bibitem{4085381}
T.~Vercauteren, A.~L. Toledo, and X.~Wang, ``Batch and sequential {B}ayesian estimators of the number of active terminals in an {IEEE} 802.11 network,'' \emph{IEEE Transactions on Signal Processing}, vol.~55, no.~2, pp. 437--450, 2007.

\bibitem{9564037}
R.~Kotaba, A.~E. Kalør, P.~Popovski, I.~Leyva-Mayorga, B.~Soret, M.~Guillaud, and L.~G. Ordóñez, ``How to identify and authenticate users in massive unsourced random access,'' \emph{IEEE Communications Letters}, vol.~25, no.~12, 2021.

\bibitem{9360811}
S.~J. Nawaz, S.~K. Sharma, B.~Mansoor, M.~N. Patwary, and N.~M. Khan, ``Non-coherent and backscatter communications: Enabling ultra-massive connectivity in 6g wireless networks,'' \emph{IEEE Access}, vol.~9, pp. 38\,144--38\,186, 2021.

\bibitem{6120373}
P.~Zhang, F.~M.~J. Willems, and L.~Huang, ``Investigations of noncoherent {OOK} based schemes with soft and hard decisions for {WSN}s,'' in \emph{2011 49th Annual Allerton Conference on Communication, Control, and Computing (Allerton)}, 2011.

\bibitem{10.5555/1146355}
T.~M. Cover and J.~A. Thomas, \emph{Elements of Information Theory (Wiley Series in Telecommunications and Signal Processing)}.\hskip 1em plus 0.5em minus 0.4em\relax USA: Wiley-Interscience, 2006.

\bibitem{1056798}
C.~Shannon, ``The zero error capacity of a noisy channel,'' \emph{IRE Transactions on Information Theory}, vol.~2, no.~3, pp. 8--19, 1956.

\bibitem{179368}
J.~Bucklew and J.~Sadowsky, ``A contribution to the theory of chernoff bounds,'' \emph{IEEE Transactions on Information Theory}, vol.~39, no.~1, pp. 249--254, 1993.

\bibitem{5707018}
D.~Sejdinovic and O.~Johnson, ``Note on noisy group testing: Asymptotic bounds and belief propagation reconstruction,'' in \emph{2010 48th Annual Allerton Conference on Communication, Control, and Computing (Allerton)}, 2010, pp. 998--1003.

\bibitem{42172}
R.~Ahlswede and G.~Dueck, ``Identification via channels,'' \emph{IEEE Transactions on Information Theory}, vol.~35, no.~1, pp. 15--29, 1989.

\bibitem{9656742}
O.~Günlü, J.~Kliewer, R.~F. Schaefer, and V.~Sidorenko, ``Code constructions and bounds for identification via channels,'' \emph{IEEE Transactions on Communications}, vol.~70, no.~3, pp. 1486--1496, 2022.

\bibitem{179339}
S.~Verdu and V.~Wei, ``Explicit construction of optimal constant-weight codes for identification via channels,'' \emph{IEEE Transactions on Information Theory}, vol.~39, no.~1, pp. 30--36, 1993.

\end{thebibliography}
\end{document}